\numberwithin{equation}{section}
 \definecolor{textcol}{rgb}{.118, .565, 1.00}
 \definecolor{rowcol}{rgb}{.218, .565, 1.00}
\newtheorem{assumption}{Assumption}
\newtheorem{definition}{Definition}
\newtheorem{remark}{Remark}
\newtheorem{proposition}{Proposition}
\newcommand{\nn}{\mathbb{N}}
\newcommand{\zz}{\mathbb{Z}}
\newcommand{\pp}{\mathbb{P}}
\newcommand{\indi}{\text{\bf 1}}
\newcommand{\set}[1]{\left\{#1\right\}}
\newcommand{\Rset}{\mathsf{R}}
\newcommand{\Btn}{\mathbb{B}}
 \renewcommand{\d}{\ensuremath{\,\mathrm{d}}}
\renewcommand{\vec}[1]{\boldsymbol{#1}}
\newcommand{\po}{\vec{\xi}}
\newcommand{\pop}{{\xi}}
\newcommand{\pw}{\vec{\omega}}
\newcommand{\pt}{\vec{\varphi}}
\newcommand{\dom}{{D}}
\newcommand{\Lr}{{L}}
\newcommand{\leqref}[1]{Eqn~\eqref{#1}}
\DeclareMathOperator{\argmax}{\arg\max}
\DeclareMathOperator{\Cp}{\mathsf{Cp}}
\newcommand{\FW}{{Freidlin-Wentzell}}
\newcommand{\tpt}{{transition path theory}}
\newcommand{\Lm}{{logistic map}}
\g@addto@macro{\endabstract}{\@setabstract}
\newcommand{\authorfootnotes}{\renewcommand\thefootnote{\@fnsymbol\c@footnote}}%
\begin{document}
\title{ }

\begin{center}
  \LARGE
  Explore Stochastic Instabilities
  of Periodic Points
   by Transition Path Theory
   \par \bigskip

  \normalsize
  \authorfootnotes

     Yu Cao\textsuperscript{1},
  Ling Lin\footnote{L. Lin acknowledges the financial
  support of the DRS Fellowship Program of Freie Universit\"{a}t Berlin.}\textsuperscript{2},
  Xiang Zhou\footnote{Corresponding author. email: {\it xiang.zhou@cityu.edu.hk}.
X. Zhou acknowledges the financial
  support of Hong Kong GRF (109113,11304314,11304715).
   }\textsuperscript{1},

   \par

    \bigskip

\textsuperscript{1}
Department of Mathematics,
        City  University of Hong Kong,  \par Tat Chee Ave, Kowloon,  Hong Kong.\par
  \textsuperscript{2}  Institute for Mathematics, Freie Universit\"{a}t Berlin,
  \par  Arnimallee 6, 14195 Berlin, Germany.
  \par
    \bigskip

  \today
\end{center}

\date{\today}
\maketitle
	
\begin{abstract}
We consider the  noise-induced transitions
in the randomly perturbed discrete  logistic map
 from a linearly stable periodic orbit consisting of $T$  periodic points.
 The traditional large  deviation theory and
asymptotic analysis for small noise limit
as well as  the derived quasi-potential can not
distinguish the quantitative difference in  noise-induced stochastic  instabilities   of these
 $T$  periodic points.
 We  generalize  the transition path theory to the discrete-time continuous-space
 stochastic process to attack this problem.
As a first criterion  of quantifying  the relative instability
 among $T$ periodic points,
we compare the distribution of the last passage locations
in the transitions from the whole periodic orbit to a prescribed set far away.
This distribution is related to  the contributions to the transition rate
from each periodic points.
The second criterion is based on the competency of the transition paths
associated with each periodic point.
Both criteria  utilise   the reactive  probability current
in the transition path theory.
 Our numerical  results for the logistic map  reveal
 the transition mechanism of escaping from the stable periodic orbit
 and  identify
  which     periodic point     is more prone to lose stability so as   to make successful  transitions
  under random perturbations.	
\end{abstract}

\maketitle


\keywordsname{: random logistic map,  transition path theory, periodic orbit, stochastic instability}

\section{Introduction}

When deterministic dynamical systems are perturbed by random noise,
even though the noise amplitude is small, it has a prominent influence on the dynamics on the appropriate time-scale.
For example, the thermal noise can induce    important
physical and biological metastability  phenomena  such as   phase transitions,  nucleation events,
configuration changes of macromolecules.
These phenomena correspond to the very unlikely
excursions  in the phase space of the  random  trajectories,
so these events  are  usually called {\it rare events}.
These   trajectories have to
 overcome some barriers
 to escape from the initial metastable state and enter another.
For the  understanding of  the occurrence of rare events,
it is of great importance    to
investigate  the non-equilibrium  statistical and dynamical  behaviours
of those  trajectories successfully making   transitions.
One of the interesting  questions is
how the ensemble of these transition trajectories
depend on  the phase space  of the unperturbed deterministic dynamical systems,
for example,  what structure  in the phase space would be the  barriers for  transitions,
how the system leaves the initial metastable state and
escapes  the basin of attraction
of this  metastable state, etc.
For  general dynamics, the metastable state
may not be a single point as a local minimum on potential energy surface;
it may be a collection of points, such as
 limit cycle, periodic orbit, or even  chaotic attractor.
 In this paper,  we are interested in,
 conditioned on the   occurrence of   rare transitions
 from  one of these  stable structures,
   through   which location {\it within}  the   metastable set
     the transition trajectories will  leave  with a higher or dominant probability.
Particularly, as an example, our study focuses on   the stable periodic orbits
in the randomly perturbed    logistic map.

In history,  many  research work target    to explore  the barrier on the basin boundary.
For the diffusion process on  a
potential energy surface (a classic model
for chemical reactions\cite{Kramers,Kampen}),
the well-known transition-state theory\cite{Erying} states that
basically   the transition state,   is a saddle point with index 1 on the potential energy surface.
The   progresses of chemical reactions are mainly described by
 heteroclinic orbits connecting the local minima
through  the saddle point, i.e., ``minimum energy path''.
In addition, one can calculate  the transition rate by computing the probability flux of particles that cross the dividing surface of two neighbouring
 potential wells.
For general continuous time dynamical systems under random perturbations,
the notion of ``most probable     path" is very useful to describe the transition process.
This path    is a
curve in the phase space with a dominant contribution in the ensemble
of transition trajectories at vanishing noise limit.
From a mathematical  viewpoint,
such  a notion  of most probable path is based on the large deviation principle (LDP)
in path space. The well-known \FW\ theory \cite{FW1998}
states that the most probable transition path from  one set  $A$ to another  $B$
 is the {\it  minimum action path}, which minimizes
the rate function  of the \FW\  LDP (aka.``\FW\ action functional")
subject to the constraint  of starting from $A$ and ending at $B$.
The  transition probability  is dominated by the minimal
value of the rate function.
Therefore, by
analytically  performing asymptotic analysis
such as WKB or   instanton  analysis
  \cite{MatSchuss1983, Naeh1990,Maier1992PRL,Maier1993PRE},
  or numerically solving the
variational problem in a path space
\cite{weinan-MAM2004,aMAM2008,Heymann2006},
one can identify
  most probable escape/transition path. This allows a
further examination of the path and the
unstable structure  in the phase space, in particular,
how this path crosses the basin boundaries.
This methodology of least action principle
is  applicable for   general   dynamical systems of
 continuous-time or discrete-time.
The applications to Lorenz model\cite{Lorenz2008}, Kuramoto-Sivashinsky PDE\cite{KS-WZE2009}
 have already discovered  the barriers on the basin boundary
in types of saddle points or  saddle cycles.

For discrete maps perturbed by noise, there
has been a long history of studying the effect of  random perturbations
on the dynamics. Some works are based on the brute-force simulation
to collect the empirical  distributions of transition
trajectories\cite{DykmanPRL1992}.
The applications of  the large deviation rate function
in the setting of discrete-time maps included
 the work in \cite{KautzPRA1987,KautzPRA1988}  which studied   the transitions between stable fixed points,  stable periodic orbits and chaotic attractors, providing empirical  evidence that
the transition state is the type of  a saddle
node.
\cite{Graham1991,Kraut2003}
focused on  the {\it quasi-potential}
(activation energy),
which is a good  quantification of the stochastic stability
for a metastable set,
 to investigate the key invariant set on the basin boundary.
The series work of \cite{LuchJETP1999,LuchPRL2003,LuchPRE2005}
carried extensive studies for   Lorenz systems, Henon maps and  other examples of discrete maps
under additive random perturbation.
Their results seem to suggest that  in
the noise-induced escape from the basin of attraction of a
stable set,  the barrier-crossing on the basin boundary
is mostly  determined by the position and stability
properties of  certain saddle point or saddle  cycles.
Recently, a new approach was developed
in \cite{Billings::virus2002,Bollt2002153}  to understand transport in stochastic dynamical systems.
They basically use the transition probability matrix
(after discretizing and reindexing the continuous   space) for identification   of active regions of stochastic transport.
Most of these existing studies deal with
the transition state (or the set)
on the basin of attraction of a metastable state (or invariant set).

In this paper,  we are interested in the transition from set to set
with the purpose of pinpointing  the role of individual points in the
  initial metastable set to escape.
   The motivation comes from the questions  below:
     how the randomly perturbed system
   leaves the periodic orbit (or limit cycle in continuous time dynamics);
   how  the stable self-sustained oscillating motion is eventually destroyed by the noise.

   Specifically,  we consider the random logistic map with additive Gaussian noise.
   We are concerned with the noise-induced transitions from $A$ to $B$ ---  two disjoint sets in the phase space.
   It is assumed that the unperturbed system has
 a linearly  stable periodic orbit (all eigenvalues are less than one in modulus), denoted as  $\po=(\pop_1,\pop_2,\cdots,\pop_T)$, where the integer $T$
is the period.
To explore the stochastic instability of
$\po$, we select $A$  as the union of the $T$ periodic points $\set{\pop_i}$
(more precisely, $A$ is the union of $T$ small windows around  $\set{\pop_i}$.
Refer to Section \ref{sec:log}).
After an exponentially long time wandering
around  the metastable set  $A$ in  the random motion of
nearly periodic oscillation,
the stochastic system will eventually  get a chance of
making a significant  transition to a  set  $B$  far away from $A$.
The question we shall address is
how the system deviates from the typical periodic oscillation and
whether it have any preference to some special periodic points to
make the transition.

The traditional techniques   based on
  large deviation principle  and the concept of quasi-potential
are not capable of addressing  the above question  due to the following fact, although they
are quite successful in studying the most active regions  {  on the basin boundary} of  the set $A$.
 If the  unperturbed deterministic
flow can go from a point  $x$ to another point $y$,  then
the cost (quasi-potential) from $x$ to $y$ is simply zero.
Thus,  if   any points in the  set $A$ can reach each other mutually  by the  deterministic
flow (periodic orbit or limit cycle certainly satisfies  this condition), then
the quasi-potential is flat on the whole set $A$.
 From any point in $A$, the minimal action to escape the basin
is the same.  The extremal path   minimizing the action functional
usually  takes infinite time
 and has    infinitely length, and  the whole invariant set  $A$ is the  $\alpha$-limit set of the extremal path:
There is no particular location in the set $A$ from where
the extremal path emits.
Hence, the action functional and the minimum action path can not distinguish individual points
inside  $A$ in such cases.
Similarly,
 the singular perturbation method
 \cite{MatSchuss1982} for the mean first passage time in the vanishing noise limit
will give a constant value of the WKB solution on the
stable limit cycle, and
thus may be not directly useful to our problem.

We use a new and attractive tool,  the   transition path theory \cite{Weinan::towards2006, Eijnden::tpt2006,Metzner::mjp2009,
TPT2010},
by modifying  this theory  for the discrete map.
The \tpt\ for continuous-time dynamical systems
 has been   proved to be an effective mathematical tool to reveal
 transition mechanism of a few complex physical and biological systems \cite{Noe::folding2009,Cameron::LJcluster2014}.
This article intends to bring the transition path theory
into studying the stochastic instability issues  for  random  discrete maps.
We shall formulate    the \tpt\ for the discrete-time continuous-space Markov process.
  We then use three key ingredients in the \tpt,
the reactive current, the transition rate and  the dominant transition path,
    to understand
the escape mechanism from the periodic orbit $A$ for any finite noise.
To quantitatively compare the stochastic instability
of the $T$ individual periodic points, we propose
two rules: the first one is the distribution of the last passage position
among these $T$ point and the second one is the starting point
of the dominant transition path.
Our  numerical  results obtained clearly show the capability of this theory in
quantitative  understanding of
the different roles of the individual points belonging to the same periodic orbit.

The paper is organized as follows.
 In Section \ref{sec:log}, we will set up  our  problem for the random logistic map. In Section \ref{sec:method},
  we briefly  review the existing methodologies.
  Section \ref{sec:TPT} is our method based on the \tpt.
  In section \ref{sec:num}, we   present   numerical results for the random logistic map.
  Section \ref{sec:sum} is our concluding discussion.

\section{Random Logistic Map}
\label{sec:log}

The randomly perturbed discrete map of our interest is the following
\[ x_{n+1} = F(x_n) + \sigma \eta_n\]
where
   $\eta_n\sim N(0,1)$  are {\it i.i.d.} standard normal random variables
   and the constant $\sigma>0$ is the noise amplitude.
 In this paper, we focus on a well-known example of $F$: the \Lm.
Logistic map is probably the simplest nonlinear mapping
giving rise to periodic  and chaotic behaviors.
It is popularly used as a discrete-time demographic model
to represent the population with density-dependent mortality.
 Mathematically, the logistic map is written by
 \[ x\to  F(x):=\alpha x(1-x),\]
where  $x$ is a number between zero and one that represents the ratio of existing population to the maximum possible population.
 $\alpha>0$ is the parameter.
When  $x$ is out of the interval $[0,1]$, the logistic map simply diverges to infinity and never returns.     The dynamics   of interest is in the interval $[0,1]$.
There are two fixed points in this interval, $0$ and $1-\frac{1}{\alpha}$.
When $0<\alpha<1$, $0$ is the only stable fixed point and when $1< \alpha < 3$, $1-\frac{1}{\alpha}$ is the only stable fixed point. Both fixed points become unstable for  $\alpha$ larger than $3$.
$\alpha = 3$ is the onset of a stable period-2 orbit, and
this period-2 orbit disappears    at $\alpha = 1+\sqrt{6}\approx 3.4495$, at which the period-4 orbit takes over.
The stable period-$2^{n}$ orbit is followed by the stable period-$2^{n+1}$ orbit if $\alpha$ increases continuously. This phenomenon  is termed as period doubling cascade
and leads to the onset of chaos. Apart from this, tangent bifurcation is  found,
e.g.,   the onset of stable period-3 orbit arises at  $\alpha = 1+2\sqrt{2} \approx 3.828$.
 Further details about the logistic map can be found in some classic literature, e.g.,  \cite{Ott::chaos1993}.

The random logistic map of our interest is
   the following additive random perturbation
restricted on  the interval $\dom=[0,1]$ with $F(x)=\alpha x(1-x)$:
\begin{equation}
\label{eq:Lm}
 x_{n+1} = F(x_{n}) + \sigma \eta_n  \mod 1.
 \end{equation}
We here impose the periodic boundary condition
for the Markov process $\{x_n\}$  so that
all  dynamics is restricted on the compact set  $\dom$.
This will guarantee
the
unique existence of the invariant
measure for $\{x_n\}$ on $\dom$
and thus  ergodicity holds for this stochastic process,
which is a fundamental assumption in the \tpt.
Other type of boundary condition is also feasible
such as the reflection boundary condition at $x=0$
and $x=1$.

   The  transition probability density of
   the discrete-time continuous-space Markov process   \eqref{eq:Lm}
      is \begin{equation}
P(x,y) = \sum_{l\in \zz} \frac{1}{\sqrt{2\pi\sigma^2}}
\exp\left (-\frac{1}{2\sigma^2}(y-F(x)+l)^2\right),
\end{equation}
where the sum over the integer  $l$ is merely a minor adjustment for the periodic boundary condition we used here.
The   density of the unique invariant measure, $\pi(x)$,  is the solution
of the following balance equation
\begin{equation}
\label{eqn:inv}
\int_{\dom} P(y,x) \pi(y) \d y =\pi(x).
\end{equation}
In other words, $\pi(x)$ is the  eigenfunction
for the principle eigenvalue of the adjoint of the transition kernel $P(x,y)$.

Now we specify the sets involved in
 the transition problems for the randomly perturbed \Lm.
 The parameter $\alpha$ in our study will be selected so that
the \Lm\ only has periodic oscillations.
The stable invariant set   of interest here
is the (linearly)   stable period-$T$ orbit
in the (unperturbed)   \Lm,
denoted as $\po = (\pop_1,\pop_2,\cdots, \pop_T)$.
 The order of $(\pop_i)$ in $\po$ is
specified  so that $\pop_{i+1}=F(\pop_i)$.
We   pick  a small neighbourhood  $A$ around the $T$
periodic points and a   disjoint  set $B$.
With these setups,  the noise-induced transitions from $A$ to $B$, named as $A$-$B$ transitions,
  will be our focus.
By specifying the width $\delta_a$, the set $A$ around the periodic orbit $\po$ is
the union of the $T$ disjoint small windows
\begin{equation}\label{setA}
A = \underset{1\le i\le T}{{\cup}} [\pop_i-\delta_a,\pop_i+\delta_a].
\end{equation}
It is possible to specify different widths  for different periodic points,
or let the interval be asymmetric around $\pop_i$.
It is also  possible to use the level set of the invariant measure $\pi$,
$\set{x: \pi(x) < \delta}$,
around the periodic points.
In the study of this paper, we use the same $\delta_a$ for simplicity.
The set $B$ is placed near the unstable fixed point $0$ (or $1$)
with the width $\delta_b$:
 \begin{equation}\label{setB}
 B = [0,\delta_b]\cup [1-\delta_b,1].
 \end{equation}
 $\delta_a$ and $\delta_b$ are small enough so that $A\cap B = \emptyset$ and $[\pop_i-\delta_a,\pop_i+\delta_a]\cap [\pop_j-\delta_a, \pop_j+\delta_a] = \emptyset$ is empty  for any $1\le i < j\le T$.
 The set $B$ in our logistic map example is around the unstable point,
 the ``furthest" boundary point from the stable set $A$.
 In general situations, this set $B$ is placed just outside  the basin of attraction of the periodic orbit $\po$ and the
  instability result about the $\po$ is typically robust for small noise amplitude.

We introduce the nonzero width $\delta_a$ for the periodic orbit $\po$,
because the space is continuous, not discrete:
it makes no sense to consider    trajectories in stochastic system exactly
leaving or entering  some singleton  points, at a fixed noise amplitude $\sigma>0$.
In practice, the  specification  of the window width $\delta_a$
should be  given by the user    who decides  to what extent
the system is deemed as out of the   oscillation status for specific applications.

Usually, the   width $\delta_a$  should be small enough so that
 the set $A_i$ can represent the transition behaviour  for the point $\pop_i$ inside.
  In theory,  for  a set $A$ to  truly
reflect the transition  mechanism
of escaping from $\po$,
the width $\delta_a$ should approach zero.
In fact,  all calculations are based on a finite $\delta_a$.
 But  since  the   set
$A$  has the  metastability property (linearly stable), then  it follows that
the conclusions to our question  based on the study of the set $A$
for finitely small $\delta_a$
are quite  robust and indeed give correct insights about the
  transition mechanisms  and the stochastic instabilities
  for the stable periodic orbit $\po$.

\section{ Related  Works    }
\label{sec:method}

We first briefly  review two existing  methods
for  the study of   stochastic systems.
The known applications of both methods
are mainly  for exploring the basin boundary.

\subsection{Large deviation principle}
 We give a glimpse of the large deviation principle  (LDP)
 or  the principle of least action for randomly perturbed discrete map.
 For continuous-time diffusions processes, refer to  the \FW\ theory
in  \cite{FW1998}.
We start from the transition probability for the random mapping $x_{n+1} = F(x_n) + \sigma \eta_n$,
which is
$$P(x,y) = \frac{1}{\sqrt{2\pi \sigma^2}} \exp \left( -\frac{(y-F(x))^2}{2\sigma^2} \right).$$
With the fixed initial point $x_0$ at time $0$ and ending point $x_n$ at time $n$, the probability of a  path $\vec{\gamma} = (x_0, x_1, \cdots, x_{n-1}, x_n)$ is
\begin{equation}
P[\vec{\gamma}] = \prod_{i=0}^{n-1} p(x_i, x_{i+1}) \propto  Z_\sigma^{-1}
\exp\left(-\frac{1}{\sigma^2} S[x_0, \cdots, x_{n} ]\right),
\end{equation}
where  $Z_\sigma^{-1}$ is the prefactor and the cost function $S$ has  the form of
\begin{equation}
S[\vec{\gamma}]=S[x_0, \cdots, x_{n}] = \frac{1}{2}\sum_{i=0}^{n-1} (x_{i+1}-F(x_i))^2.
\end{equation}
This cost function $S$ is actually   the {\it rate function} (aka. action) of the LDP
at the vanishing noise limit $\sigma\downarrow  0$.
By the Laplace's method,
the path probability  $P[\vec{\gamma}]$ is asymptotically dominated by
$\exp\left( -\frac{1}{\sigma^2}  S_{\min}\right),$
where $S_{\min} = \min_{\gamma} S[\vec{\gamma}]$.
The {\it minimum action  path}  (MAP) $\vec{\gamma}^{*}$
 is the one such that    $S[\vec{\gamma}^{*}] = S_{\min}$.
If this minimal action $S_{\min}$ is viewed as the function of
the initial point $x_0$ and the ending point $x_n$ for all possible $n$,
then it is the so-called {\it quasi-potential},
which is quite useful for quantifying
the stability of each basin   against the random perturbation
\cite{FW1998,KautzPRA1987,subcrit2010}.
When the initial point $x_0$ is in a stable structure
(fixed point, periodic orbit, chaotic attractor) of the phase space,
and $x_n$ is out of the basin of attraction of this stable structure,
 the MAP is usually called the  {\it most probable escape path} (MPEP).
 The intersection part of the MPEP with the basin boundary
 is quite revealing for transition states or active regions during  crossing the boundary.

%

One obvious feature  of this least action method
based on the LDP
is that the cost is {\it zero} for a path
from $\pop_1$ to $\pop_2$ if $\pop_2$ is exactly equal to
 $F(\pop_1)$. This means that starting from
 any point in the same period-$T$ orbit,
 the minimal action is the same.
Thus, one can not tell which point in the periodic orbit,  limit cycle,
or even  chaotic attractor, is more prone to
  the random perturbation, since they share the   same
action.

\subsection{PDF flux }
\label{ssec:pdfflux}
To study  the bi-stabilities in  the stochastically perturbed dynamical systems,
Billings {\it et al.} \cite{Bollt2002153,Billings::virus2002}
  proposed
a method  on the   transport  of probability density function
under the discrete map,
in which the one-step transport
is described by the Frobenius-Perron operator,
i.e., the adjoint of the transition kernel    $P(x,y)$.
They investigated how an initial distribution
is transported to a given region in the phase space under the
action of this operator:
\[ \rho(x) \to \mathcal{F}[\rho](x):= \int_\dom P(y,x) \rho(y)\d y. \]
 Depending on the initial distribution,  they call
$\mathcal{F}[\rho]$ the {\it area flux}
if $\rho$ is uniform and  call $\mathcal{F}[\rho]$ the  {\it PDF flux} if $\rho$ is  the invariant measure
$\pi$ (\leqref{eqn:inv}). For a given set $A\subset \dom$,
 the ``mass flux into $A$"
 is defined as
 \[ \int_{x\in A} \left( \int_{ y \in \dom\setminus A} P(y,x) \rho(y)\d y \right) \d x \]
and ``mass flux out of  $A$ "  (by switching $A$ and its complement set $\dom\setminus A$)
 is defined as
  \begin{equation}
  \label{eqn:Fa}
  \begin{split}
\mathcal{F}^-_A
 & = \int_{x  \in \dom\setminus A} \left( \int_{ y \in  A} P(y,x) \rho(y)\d y \right) \d x
   \\
 =&   \int_{y  \in \dom\setminus A} \left( \int_{ x \in  A} P(x,y) \rho(x)\d x \right) \d y
 \\
 =&   \int_{ x \in  A} \rho(x) \left( \int_{y  \in \dom\setminus A}   P(x,y) \d y \right)   \d x.
    \end{split}
    \end{equation}
 The quantity $\rho(x)P(x,y)$ was used
for $x$ in one basin and $y$ in another basin
to investigate  where a trajectory is most likely to escape    the basin boundary.
For a few applications \cite{Billings::virus2002}, the saddle cycles on the basin boundary
usually have the maximal flux across the boundary.

 \section{ Transition Path Theory for discrete map }
 \label{sec:TPT}

We first formulate the \tpt\ for discrete map.
Then we identify the point in the orbit $\po$ with the highest probability
mass of being the last passage position during the $A$-$B$ transition,
which is actually the point with the biggest contributions to the  transition rate.
To further study the  development of the current for the transition probability
after emitting the set $A$,  we  carry out
the pathway analysis and target for the dominant transition paths.
The precise definitions of these concepts will be explained soon.
 We remark that the first approach based on the transition rate
 is  relatively easy for calculation and quite universal for any situations.
 The second path-based approach needs a thorough exploration of
connected paths based on network   theory and  could  have  difficult  situations
 that fail to compare the stochastic instability in a quantitative way
 due to complexity of pathways,
 although our \Lm\ example does not meet with  such dilemma
 and shows a clean result.
 In addition, the above  two approaches may also give two different conclusions
 since the viewpoints of interpreting and comparing   the stochastic instabilities are different.

\subsection{Transition path theory for randomly perturbed discrete map}

The original TPT was formulated for the continuous-time
continuous-space Markov process \cite{Weinan::towards2006,Eijnden::tpt2006,TPT2010}.
The TPT for the continuous-time discrete-space Markov process
(jump process) was developed in \cite{Metzner::mjp2009},
in which a detailed analysis for the pathways on the discrete space
is of particular interest.
Here we present the method of  TPT in the setting of
the discrete-time
continuous-space Markov process.

 The \tpt\  does not consider the limit of vanishing noise.  It assumes that the stochastic system
is ergodic and has a unique invariant measure.
The main focus  of the
TPT is  the statistical behaviour of the
ensemble of    {\it reactive trajectories}
between two arbitrary disjoint sets.
Assume that  $A$ and $B$
are two disjoint  closed subsets of
the state space $\dom$ ($\dom=[0,1]$ for our example of the logistic map),
each of which is the closure of a nonempty
open set.
The transition of our interest is from $A$ to $B$.
For a discrete-time homogeneous Markov process
$\{X_n: n \in \zz  \}$,   define
the first   hitting time after time $m$ and the last hitting time before time $m$ of $A\cup B$ as follows, respectively,
\begin{equation}
\begin{split}
H^{+}_{AB}(m) &:= \inf\{n\ge m: X_n \in A\cup B \},\\
H^{-}_{AB}(m) &:= \sup\{n\le m: X_n \in A\cup B \}.\\
\end{split}
\end{equation}
Then for a generic trajectory $(X_n)_{n\in \zz}$, the ensemble of
{\bf $A$-$B$ reactive trajectories} is defined to be
the collection of     pieces of  the truncated trajectories:  $\{X_n: n\in \Rset\}$,
where $n\in \Rset$ if and only if
\[X_{H^+_{AB}(n+1)} \in B~~~\mbox{   and   } ~~~~X_{H^-_{AB}(n)} \in A. \]
$\Rset$ is the set of times at which $X_n$ belongs to an $A$-$B$ reactive trajectory.
Refer to Figure \ref{fig:traj} for one piece of reactive trajectory
extracted from a generic trajectory.
The intuition for defining   $A$-$B$ reactive trajectories  is
that the points  on these reactive trajectories will  first reach $B$  rather than $A$
and came from  $A$ rather than $B$.

\begin{center}
\begin{figure}[htbp]
\includegraphics[width=0.91\textwidth]{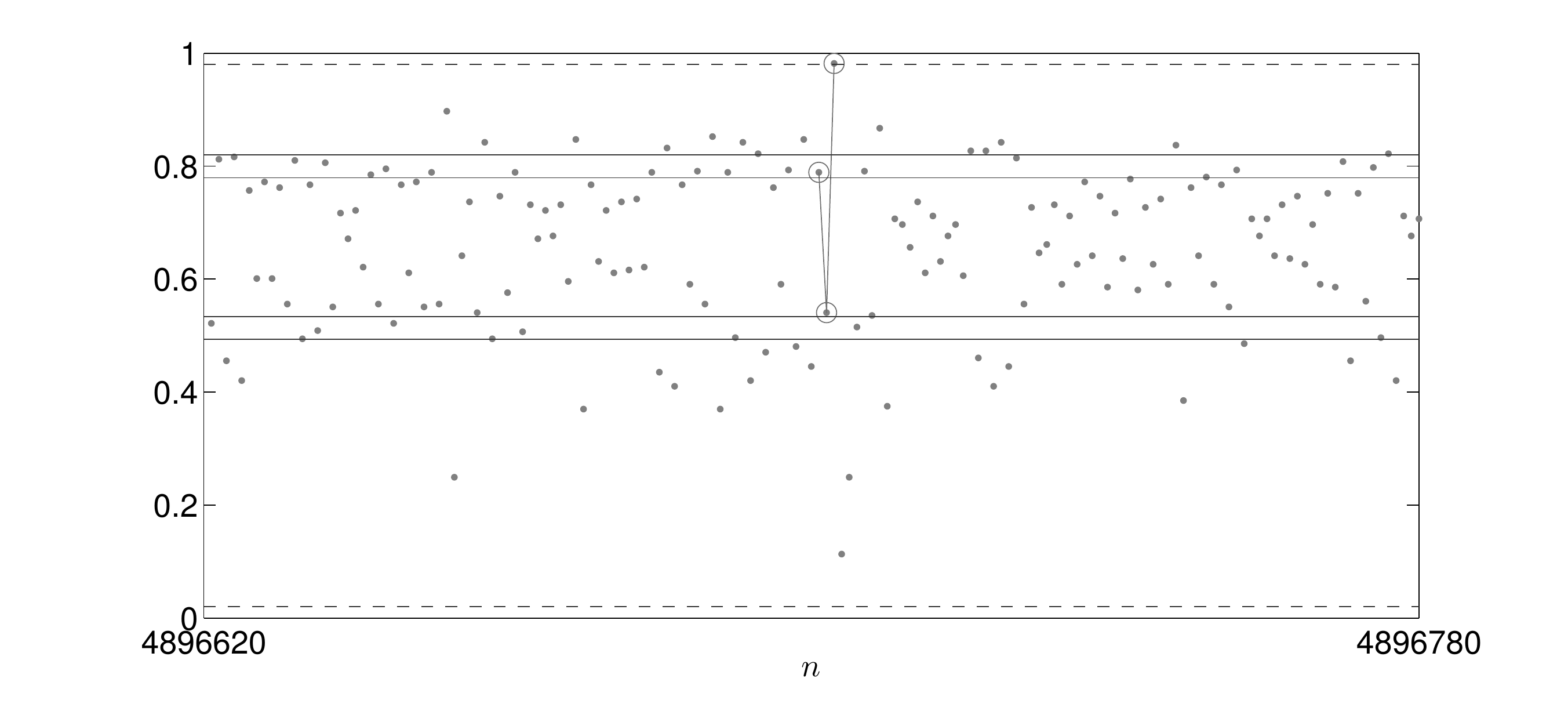}
\caption{A snapshot of  a generic trajectory (dots in the plot) and one reactive trajectory (three points circled in the plot)  of the randomly perturbed \Lm  observed
in the time interval $[4896620, 4896780] $. The set $A$ is the union of $A_1$ and $A_2$
around the periodic points $\po=(0.5130, 0.7995)$, corresponding to
two narrow bands with length $2\delta_a= 0.04$  shown   by solid horizontal lines.
The bounds of the set $B$ near $0$ and $1$ are shown in dashed lines.
$\alpha=3.2$, $\sigma=0.04$.
}
\label{fig:traj}
\end{figure}
\end{center}

The most important ingredient in the TPT is the  probability
 current for  $A$-$B$ reactive trajectories.
For the  continuous state space $\dom$, we introduce
its space-discretized version first:
\begin{equation}
\begin{split}
\mathsf{J}(x,y,\Delta x,\Delta y):=\lim_{N\rightarrow \infty}\frac{1}{2N+1}\sum_{n=N}^{-N}
\bigg(  \indi_{[x-\frac{\Delta x}{2}, x+\frac{\Delta x}{2}]}(X_n) \indi_{[y-\frac{\Delta y}{2}, y+\frac{\Delta y}{2}]}(X_{n+1})
\\
\indi_{A}(X_{H^{-}_{AB}(n)}) \indi_{B}(X_{H^{+}_{AB}(n+1)})
 \bigg),
\label{empirical_F_x_y}
\end{split}
\end{equation}
where $\indi_{\{\cdot\}}(\cdot)$ is the indicator function. Then the
{\bf $A$-$B$ reactive probability current}
  is  defined as the following limiting function for $x$ and $y$ in $\dom$,
\begin{equation*}
J(x,y) := \lim_{\Delta x,\Delta y\to 0} \frac{\mathsf{J}(x,y,\Delta x, \Delta y)}{\Delta x \Delta y}.
\label{def_J_x_y}
\end{equation*}
We sometimes just call $J$   the reactive current  whenever the
specification of the sets $A$ and  $B$ is clear.

The above definition of the reactive current  $J$ is
based on the time average for an infinitely long generic  trajectory.  To obtain an
ensemble average, we need assume the
Markov process $\{X_n\}$ is ergodic, i.e.,
the unique existence of the invariant probability density such that
$\pi(x) = \underset{N\rightarrow \infty}\lim \frac{1}{N}\sum_{n=0}^{N-1} \indi_{x}(X_n)$.
Then, \eqref{empirical_F_x_y} leads to the following formula  of the reactive current
\begin{equation}\label{J_equation}
J(x,y) = q^{-}(x)\pi(x)P(x,y)q^{+}(y),~~x\in \dom, y \in \dom.
\end{equation}
where $P(x,y)$ is the transition density function of  the Markov process
$P(x,y\d y = \pp[ X_{n+1} \in [y,y+\d y)\vert X_n=x ] $,
$q^+$ and $q^-$ are the
  the forward and backward  committor functions,  defined  as follows, respectively:
\[
\begin{split}
q^{+}(x) := \mathbb{P}[X_{H^{+}_{AB}(0)} \in B\vert X_0=x],
~~~~
q^{-}(x) := \mathbb{P}[X_{H^{-}_{AB}(0)} \in A\vert X_0=x].
\end{split}\]
By definition, the committor functions satisfy the following boundary conditions
\begin{equation}
\label{eqn:bc}
\begin{cases}
  q^{+}(x) = 0, ~\mbox{and }~q^-(x) =1,   & \mbox{if}~~ x\in A,  \\
  q^{+}(x) = 1, ~\mbox{and }~q^-(x) =0,   & \mbox{if}~~ x\in B.  \\
    \end{cases}
 \end{equation}
 This implies the fact
\begin{equation}\label{eqn:J=0}
J(x,y)=0,~~ \mbox{when }  x\in  B, y\in \dom \mbox{  or  }  x\in \dom,  y \in A.
\end{equation}
It is known from  \cite{Weinan::towards2006,Metzner::mjp2009} that
the committor  functions satisfy the following Fredholm integral equation
for all $x \notin A\cup B$,
\begin{equation}
q^{+}(x) = \int_{\dom}  P(x,y) q^{+}(y)\d y,  ~~~ x \in \dom\setminus (A\cup B)
\label{eqn:q+}
\end{equation}
and
\begin{equation}
q^{-}(x) = \int_{\dom} P^{-}(x,y) q^{-}(y) \d y,~~~ x \in \dom\setminus (A\cup B)
\label{eqn:q-}
\end{equation}
where \begin{equation}
P^{-}(x,y): = \frac{1}{\pi(x)} P(y,x)\pi(y)
\label{bk-P}
\end{equation}
is the transition kernel of the time reversed
process $\{X_{-n}\}_{n\in \zz}$.
Since the transition kernel $P$ is irreducible in the ergodicity assumption,
then the functions $q^+(x)$ and $q^-(x)$ are always strictly positive
for any $x\notin A\cup B$.

\begin{remark}\label{rk:pdfflux}
Compared with the   PDF flux $\pi(x)P(x,y)$
in Section \ref{ssec:pdfflux},
the $A$-$B$  reactive current $J(x,y)=\pi(x)P(x,y)q^-(x)q^+(y)$ in  the \tpt\
includes the additional  global information for the $A$-$B$ transition
of   the committor functions.  These two quantities  are equal
only when $x\in A$ and $y\in B$.
%
%
\end{remark}

In the next,  we shall address two main issues about the methods
based  on the TPT
for the application to the random perturbed discrete map.
The first one  is the robust  calculation of the reactive current  function
$J(x,y)$ and the second   is how to use this reactive current function to
analyze the reaction pathways as well as the reaction rate.
Based  on these developments, we shall carry out the study  for the roles  of individual points
in $A$ and   evaluate their stabilities in
the content of $A$-$B$ transitions.

We   rewrite the equations \eqref{eqn:q-} and \eqref{bk-P} by introducing $\tilde{q}^-(x) := \pi(x)q^{-}(x)$, then
\begin{equation}
\label{eqn:q-t}
\tilde{q}^-(x) = \int_{\dom}  P(y,x) \tilde{q}^-(y)\d y,
~~x\in \dom\setminus (A\cup B).
\end{equation}
The boundary condition is $\tilde{q}^-(x)=\pi(x)$ for $x\in A$
and $\tilde{q}^+(x)=0$ for $x\in B$.
 \leqref{eqn:q-t}  has the same form as \leqref{eqn:q+}
by transposing the transition kernel $P$.
There are two reasons for introducing $\tilde{q}^-$: (1)
the reactive current  $J$, rather than $q^-$ itself,  is of more interest in understanding the mechanics of transition and it is not necessary to calculate $q^{-}$ explicitly in order to obtain $J$; (2) the numerical method to calculate $q^{-}$ directly is instable under small noise intensity and this problem can be resolved by calculating $\tilde{q}^-$ instead.

The system \eqref{eqn:q-t}  and \eqref{eqn:q+} together with the boundary condition
\eqref{eqn:bc} can be solved as a  linear system
after  discretizing the spatial domain
$\dom=[0,1]$.
$q^+(x)$ and $\tilde{q}^-(x)$ typically exhibit  boundary layers or discontinuities at the
boundaries of  $A$ and $B$. In our numerical discretisation,
the spatial mesh grid is adjusted in a moving mesh style to
distribute more points near the boundaries
  by checking the derivatives $|\nabla q^+|$ and $|\nabla \tilde{q}^-|$
(refer to \cite{aMAM2008} for details).

Since  \begin{equation}
J(x,y) = q^{-}(x)\pi(x)P(x,y)q^{+}(y) = \tilde{q}^-(x)q^{+}(y)P(x,y),
\label{eqn:J}
\end{equation} 
then we can see from \eqref{eqn:q+} and \eqref{eqn:q-t}
that
\begin{equation}\label{eqn:Jx}
\int_{y\in \dom} J(x,y) dy =  q^-(x) \pi(x) \int_{y\in \dom} P(x,y)q^+(y) \d y
= \tilde{q}^-(x) ) q^+(x),  ~~ \forall x\notin (A\cup B);
\end{equation}
 \begin{equation}\label{eqn:Jy}
 \int_{x \in \dom} J(x,y) \d x
=
 q^+(y)\int_{x \in \dom}   P(x,y)   \tilde{q}^-(x) \d x=
\tilde{q}^-(y)  q^+(y), ~~ \forall y\notin (A\cup B).
\end{equation}
The above quantity on the right hand sides is
 actually  the probability density of  reactive trajectories:
\[\pi^\Rset(x) := q^-(x) \pi(x) q^+(x)=\tilde{q}^-(x)q^+(x), ~~
\forall x\in \dom \setminus (A\cup B).\]
Under ergodicity condition, this probability density $\pi^\Rset(x)$ corresponds to the following
time average:
$\pi^\Rset(x)\d x=\lim_{N\to \infty}\frac{1}{2N+1}\sum_{-N}^N \indi_\Rset(n) \indi_{[x, x+\d x)} (X_n).$
 \leqref{eqn:Jx} and  \leqref{eqn:Jy} together   show that
 \begin{equation}
 \label{eqn:J0}
 \int_{y\in \dom} J(x,y)\d y = \int_{y\in \dom} J(y,x)\d y
 =\pi^\Rset(x), ~~~\mbox{ for any }  x\in \dom\setminus (A\cup B).
 \end{equation}
So, the reactive current $J(x,y)$ defines a flow at any $x \in \dom\setminus (A\cup B)$
since the in-flow is equal to the out-flow.

\begin{remark}
For the transition kernel $P(x,y)$ based on the discrete map, it is possible that
$q^\pm (x)$ is continuous only in the open set $\dom \setminus (A\cup B)$.
The one-sided limit from  the open set $\dom \setminus (A\cup B)$
may not equal the boundary value
at  $\partial A$ or  $\partial B$ (note that $A$ and $B$ are closed set and $\partial A\subset A$,
$\partial B\subset B$).
Thus,  there may be a jump discontinuity $q^\pm (x)$ at $x\in \partial A \cup \partial B$.
 Refer to Figure \ref{fig:fbcf} in the next section for the example of \Lm.
This means that $J(x,y)$ in \eqref{eqn:J} may also have the jump
discontinuities whenever $x$ or $y$ crosses the boundaries  at $\partial A \cup \partial B$.
\end{remark}

\subsection{Transition rate and   most-probable-last-passage periodic point}
\label{ssec:mfp}

The reactive current $J$ allows us to calculate
how frequently the transition occurs from $A$ to $B$,
i.e.,   the transition rate.
The {\bf  transition rate}  is the average number of
transitions from $A$ to $B$ per unit time, defined by
\[\kappa_{AB}:=\lim_{N\to \infty} \frac{\# \{\mbox{transitions from } A \mbox{ to } B
\mbox{ in } [-N,N] \}}{2N+1}.\]
With  the definition of the set $\Rset$, we can rewrite the above as
\[ \kappa_{AB} = \lim_{N\rightarrow \infty} \frac{1}{2N+1} \sum_{-N}^{N} \indi_{A}(X_n)
\indi_{\dom\setminus A}(X_{n+1})\indi_{\Rset}(n).\]
\begin{remark}
\label{rmk:kab}
When the set $A$ is the union of disjoint compact subsets $A=\cup_{i=1}^K A_i$,
then it is obvious that the  $A$-$B$ transition rate
has the following decomposition
\[\kappa_{AB}=\sum_{i=1}^K  \kappa_{A_i B}
:=\sum_{i=1}^K \lim_{N\to \infty} \frac{1}{2N+1} \sum_{-N}^{N} \indi_{A_i}(X_n)
\indi_{\dom\setminus A}(X_{n+1})\indi_{\Rset}(n),\]
where $\Rset$ still means  the $A$-$B$ transitions.
Then  the ratio
$\displaystyle \frac{\kappa_{A_i B}} {\kappa_{AB}} $
is exactly the  probability that the reactive trajectory
selects the subset $A_i$ to leave the set $A$ during
its last stay in the set $A$.
\end{remark}

Using the ergodicity,     the transition rate  is calculated as follows
\begin{equation}
\label{eqn:rate}
\begin{split}
\kappa_{AB}  &= \int_{x\in A}\int_{y\in \dom\setminus A} J(x,y)  \d y  \d x
= \int_{x\in A}\int_{y\in \dom} J(x,y) \d y \d x
\\
&=\int_{x\in A}  {q}^-(x) \pi(x)   \int_{y\in \dom} P(x,y)q^+(y)   \d y  \d x
\\
&=\int_{x\in A}   \pi(x)   \int_{y\in \dom} P(x,y)q^+(y)   \d y  \d x,
\end{split}
\end{equation}
where the definition $J(x,y)=\pi(x)P(x,y)q^-(x)q^+(y)$
and the facts that $J(x,y)=0$ for $y\in A$ and
$q^{-}(x)=1$ for $x\in A$ are applied.

From \leqref{eqn:J0}, it is clear that
\[
\int_{x\in   A\cup B}  \int_{y \in \dom}J(x,y)  \d y   \d x=
\int_{x\in   A\cup B}  \int_{y \in \dom}J(y,x)  \d y   \d x .
\]
By \leqref{eqn:J=0},  the above equality becomes
\[\int_{x\in    A}  \int_{y \in \dom}J(x,y)  \d y   \d x=
\int_{x\in   B}  \int_{y \in \dom}J(y,x)  \d y   \d x .\]
Thus, there is an equivalent formula for the transition rate:
\begin{equation}\begin{split}
\kappa_{AB}
&= \int_{x\in   B}  \int_{y \in \dom}J(y,x)  \d y   \d x
=\int_{ y\in   B}  \int_{x \in \dom}J(x,y)  \d x   \d y.
\end{split}
\end{equation}

The transition rate \eqref{eqn:rate}
is the total contribution  of the reactive current out of $A$ and into $B$.
To distinguish the different points
in $A$,  where the reactive current $J$ initiates,
we    introduce the following two functions $r_{AB}^{-}(x)$
and $r_{AB}^{+}(y)$ to represent
 the local contribution of the reactive current to the reaction rate:
 \begin{align}\label{eqn:r-}
 r_{AB}^{-}(x)
& := \int_{\dom\setminus A}J(x,y)  \d y  = \int_{\dom} J(x,y) \d y, \qquad \text{for } x\in A,\\
r_{AB}^{+}(y) &:= \int_{\dom\setminus B} J(x,y) \d x = \int_{\dom} J(x,y) \d x, \qquad \text{for }y\in B.
\label{eqn:r+}
\end{align}
Note that like \leqref{eqn:rate}, $r_{AB}^{-}(x)= \pi(x)   \int_{y\in \dom} P(x,y)q^+(y)   \d y$,
requiring only the forward committor function $q^+$.

It is easy to see that $r^-_{AB}$ and $r^+_{AB}$ defined in (4.16) and (4.17), after normalization, are known as the reactive exit and reactive entrance distributions in
the transition path theory \cite{Cameron::LJcluster2014,Lu::PTRF2015}.
Indeed, by Remark \ref{rmk:kab},
the probability density function of the last passage position  on $A$
of a typical reactive trajectory   is then given by $\displaystyle \frac{r^-_{AB}(x)}{\kappa_{AB}}$
(note   $\int_A r^-_{AB}(x) \d x =  \kappa_{AB}$).
Similarly, the   probability density function  of the first entrance position on $B$
of a typical reactive trajectory  is then given by $\displaystyle \frac{r^+_{AB}(y)}{\kappa_{AB}}$.
We then  define the {\it  most-probable-last-passage point}  in $A$ as
 \begin{equation}
\hat{x} :=  \underset{x\in A} \argmax\ \frac{r^-_{AB}(x)}{\kappa_{AB}}=
\underset{x\in A} \argmax\ r_{AB}^{-}(x),
\end{equation}
and the {\it most-probable-first-hitting point} in $B$ as
\begin{equation}
\hat{y} :=\arg\max_{y\in B}\  \frac{r^+_{AB}(y)}{\kappa_{AB}}= \arg\max_{y\in B}\ r_{AB}^{+}(y).
\end{equation}
Of our particular interest  is the
most-probable-last-passage  point $\hat{x}$ in $A$.
We can think of this point as
the most  $A$-$B$ ``reactive" point in the set $A$.
In terms of instability due to the noisy perturbation,
this     point means the least stable location
in the set $A$ conditioned on the
transitions from $A$ to $B$.

For the problem of the periodic orbits $\po$ in the logistic map,
the   set $A$ is defined  as the union of neighbours of the $T$ periodic points
$\pop_1, \cdots, \pop_T$, i.e.,
$A=\underset{1\le i \le T}\cup[\pop_i-\delta_a, \pop_i+\delta_a]$.
Since $\delta_a$ is small, we can use
\begin{equation}
 r_{AB}^{-}(i) := \frac{1}{2\delta_a} \int_{\pop_i-\delta_a}^{\pop_i+\delta_a} r_{AB}^{-}(x) \d x.
\end{equation}
to represent the contributions to the total flux $\kappa_{AB}$
from the point $\pop_i$.
We define  the {\bf  most-probable-last-passage periodic point}
(abbreviated to ``MPLP")
as  the point  $\pop_{\hat{i}}$ having  the maximal value
$\{r_{AB}^{-}(i):i=1,\cdots, T\}$. This MPLP  is
 the most unstable periodic point in the sense of
transition from the periodic orbit $\po=(\pop_1,\cdots,\pop_T)$ to the set $B$.

\begin{remark}
The   transition rate
$\kappa_{AB}$ is the integration
over $x\in A$ for the function
\[ \pi(x)  \int_{y\in \dom\setminus A} P(x,y)q^+(y)   \d y  \d x.\]
As mentioned in Remark \ref{rk:pdfflux},
compared with the PDF flux defined in \leqref{eqn:Fa}
(where $\rho=\pi$),
the   difference between these two formulations  is that $q^+(y)$ is multiplied
onto $P(x,y)$ here.
The inclusion of this  forward committor function
indicates that  in the \tpt, the object of focus  is   the $A$-$B$ reactive trajectories,
which   have to reach the target set $B$ {\emph {before}}
returning   to $A$.
The trajectories counted in
the PDF flux  \eqref{eqn:Fa}
is a much larger set containing
those    trajectories  which failed to reach $B$   and
 return to $A$ again.
 So, for the same stochastic system, $\kappa_{AB}$  is usually much smaller than
 the quantity
 $\mathcal{F}^-_A$ in \leqref{eqn:Fa}
 unless $B$ is infinitely  close to $\dom\setminus A$.
\end{remark}

The   definition of the above MPLP periodic point
  is associated with the integration of the reactive probability current $J(x,y)$ for  all $y\in \dom\setminus A$.
It    does not take account of  what happens after the reactive  current leaves $A$ from the point $x$.
So,  it is possible that,  once the  reactive  current flows out of $A$ from the
MPLP point $\hat{x}$,
the reactive current could quickly diverge and  spread out,
and as a result, in terms of the transition paths from $A$ to $B$,
different transition paths can carry significantly different values of the
reactive currents.
 We then need to
    find the dominant ones among all the transition  paths connecting $A$ and $B$.
The starting points in $A$ of the dominant transition  paths
will give our second   description
of the stochastic instabilities to distinguish the periodic points
in $\po$.  Apparently,  when the dominant transition paths are not unique
due to the complexity of the problem,
it is   possible that
the starting points of these dominant transition paths
may lie in multiple subsets $ A_i$ for the case of $A=\cup_{i=1}^T A_i$, which
means that all these subsets (or the periodic points) are equally
instable  by  this path-based   criterion.

\subsection{Competency   and   maximum competency periodic point}
\label{ssec:cap}

The analysis of   pathways is built on
the   {\bf effective reactive probability current } $J^{+}(x,y)$, which is defined by
\begin{equation}
J^{+}(x,y) := \max(J(x,y)-J(y,x),0).
\end{equation}
$J^+(x,y)$ is always non-negative and represents the net  reactive flux
from $x$ to $y$.
We may write
\[
J^{+}(x,y)=\frac {J(x,y)-J(y,x)+\lvert J(x,y)-J(y,x)\rvert}2.
\]
Then using \leqref{eqn:J0}, we obtain
 \begin{equation*}
 \int_{y\in \dom} J^+(x,y)\d y = \int_{y\in \dom} J^+(y,x)\d y, ~~~\mbox{ for any }  x\in \dom\setminus (A\cup B).
 \end{equation*}
When $y\in A$, $J(x,y) = 0$ and it follows that when $x\in A$,
$J^{+}(x,y) = \max(J(x,y)-J(y,x), 0) = J(x,y)$. So, the formula of the rate
\eqref{eqn:rate} can also be written in terms of the effective current $J^+$:
\[   \kappa_{AB} = \int_{x\in A, y\in \dom} J(x,y) \d x \d y=
 \int_{x\in A, y\in \dom} J^+(x,y) \d x  \d y.
\]

The effective current $J^+(x,y)$ naturally leads to a series of concepts
about the transition paths. These concepts are well described for
countable discrete space in \cite{Metzner::mjp2009}.
Indeed, in terms of the algorithms, we can divide the continuous domain
$\dom$ into a   large number of very fine intervals (much smaller than the widths $\delta_a$
and $\delta_b$) and apply the discrete algorithms  based on the  graph theory
described  in \cite{Metzner::mjp2009}.
The theoretical formulation we give below is for a continuous space domain,
and we believe this formulation has its own interest.
To represent the functionality of the effective current $J^+$,
we shall use    a generic
two-dimensional function $f(x,y)$, which is
defined on $\dom\times \dom$, associated with the given disjoint subsets
$A$ and $B$.
This function $f(x,y)$ is an analogue of the weight
for an edge from one node $x$ to another $y$ in the graph theory.
 Clearly,  $f$ has to meet the   properties that
$J^+$ has.
We assume that the triplet $(A,B,f)$ for a compact state space $\dom$
satisfies the following assumption.
\begin{assumption}
\label{asp}
\
\begin{enumerate}
 \item The sets $A$ and  $B$ are disjoint nonempty closed subsets of
 the state space $\dom$
and  $A\cup B \subsetneqq \dom$;
\item $f(x,y)$ is always non-negative for all $(x,y)\in \dom \times\dom$ and
$$
f(x,y)=0,~~ \mbox{if }  x\in  B, y\in \dom \mbox{  or  }  x\in \dom,  y \in A.
$$
\item
$
f(x,x)=0, ~~ \mbox{for } x\in \dom.
$
\item
For any $x\in \dom\setminus (A\cup B)$,
\[
\int_{y\in \dom} f(x,y)\d y = \int_{y\in \dom} f(y,x)\d y.
\]
\item
$f(x,y)$ is bounded and  piecewise continuous in
$\dom\times \dom$.
\end{enumerate}
\end{assumption}

\begin{definition}
\label{def:tp}
Given two disjoint subsets $A'$, $B'$ in $\dom$ and
the triplet  $(A, B, f)$ satisfying Assumption \ref{asp},
for any $n\in \nn$,
 $\pw = (\omega_0, \omega_1, \dots, \omega_n)\in \dom \times \cdots \times \dom$ is
 called an $A'$-$B'$ {\bf  transition path}   associated with  $(A,B,f)$,
  if
 \begin{enumerate}
 \item
  $\omega_0\in A'$, $\omega_n\in B'$;
  \item $f(\omega_i, \omega_{i+1}) > 0$ for $0\le i \le n-1$.
\end{enumerate}
\end{definition}
Note that property (2) in Assumption \ref{asp} implies that $\omega_i\notin (A\cup B)$ for all $1\le i \le n-1$.

We actually use $A'=A$ (or $A'\subset A$) and $B'=B$ in most cases.
Occasionally, we need a different set $B'$ from $B$.
The following definition  of the path competency
is from the graph theory.

\begin{definition}
\label{def:capp}
We     define the {\bf competency} of a path $\pw = (\omega_0, \omega_1, \dots, \omega_n)$
as the minimal value of $f(\omega_i,\omega_{i+1})$
for all $0\leq i\leq n-1$,  that is,
\[
\Cp(\pw):=\min_{0\le i\le n-1}f(\omega_i,\omega_{i+1}).
\]
\end{definition}
\begin{remark}
The notion of ``competency'' defined above is referred to as capacity in the context of the graph theory.
However, the terminology ``capacity" is also used and plays a significant role in the classical potential theory for stochastic systems which is
closely related to the transition path theory. So to avoid confusion, we adopt a different terminology ``competency".
\end{remark}
Property (2) in Definition \ref{def:tp} implies that the competency of any $A'$-$B'$ transition path is always strictly positive.

\begin{definition}
With the same assumption in Definition \ref{def:tp},
a subset $\mathcal{C}$ of the product space $\dom \times \dom $  is called {\bf $A'$-$B'$ $f$-connected},
 if there exists at least one
$A'$-$B'$ transition path $\pw = (\omega_0, \omega_1, \dots, \omega_n)$
  for some $n\geq 1$, associated with the triplet  $(A,B,f)$,  such that
every directed  edge  $(\omega_i, \omega_{i+1})$ belongs to $ \mathcal{C}$ for all $0\le i\le n-1$.

The  collection of all  $A$-$B$ transition paths  with length $n$ and all edges contained  in the
 set  $\mathcal{C}$
is denoted by
$\mathbb{G}_n(\mathcal{C})$.
$\mathbb{G}(\mathcal{C}):=\cup_n \mathbb{G}_n(\mathcal{C})$.

\end{definition}

 We   drop out the function $f$ most of  the time and simply say the set $\mathcal{C}$
 is $A'$-$B'$  connected.
We are particularly interested in the special set  $\mathcal{C}$
  in the form of the super level set of the function $f$.

\begin{definition}
\label{def:cpf}
With the same assumption in Definition \ref{def:tp}, define the superlevel  set
of the function $f$ for any non-negative  real  number $z$, $$\Lr_z := \{(x,y)\in \dom \times \dom: f(x,y) \ge z \}.$$
The  {\bf  $A'$-$B'$  competency} of the function $f$,  denoted as $z^*(A',B')$,  is defined as
\begin{equation}
\label{eqn:z*}
z^{*}(A',B') := \sup\set{z\geq 0: \Lr_z \text{  is } A'\mbox{-}B' \text { connected}}.
\end{equation}
$\Lr_{z^{*}(A',B')}$ is call the {\bf minimal $A'$-$B'$ connected superlevel set} of $f$
if the maximizer can be reached:
\[
z^{*}(A',B')= \max\set{z\geq 0: \Lr_z \text{  is } A'\mbox{-}B' \text { connected}}.
\]
As a convention, when $A'$ and $B'$ are not specified,
$A'$ is $A$ and $B'$ is $B$ by default and we simply say the competency of the function $f$,
 the minimal connected set and denote $z^*(A',B')$ as $z^*$.

\end{definition}

\begin{remark}
\label{rmk:mcss}
The relation between Definition \ref{def:capp} and   Definition \ref{def:cpf}
is that
\begin{equation}
\label{eqn:cap}
z^*(A',B')=\sup\set{\Cp(\pw): \pw \text{  is an } A'\mbox{-}B' \text { transition path}}.
\end{equation}
Indeed, if $\pw$ is an $A'$-$B'$ transition path, then $L_{z}$ is $A'$-$B'$ connected for $z\le\Cp(\pw)$;
and conversely, if $L_{z}$ is $A'$-$B'$ connected, then any $A'$-$B'$ transition path $\pw=(\omega_0,\omega_1,\cdots,\omega_n)$ with edges contained in $L_z$
must satisfy $f(\omega_i,\omega_{i+1})\ge z$ for all $0\leq i\leq n-1$,
thus $\Cp(\pw)\ge z$.
In particular,   all the $A$-$B$ transition paths with edges
in the minimal $A$-$B$ connected superlevel  set $\Lr_{z^*}$
must have the same competency $z^*$ as the function $f$.
\end{remark}

\begin{definition} \label{def:dtp}
With the same assumption in Definition \ref{def:tp},
let $z^*(A',B')$ be the $A'$-$B'$ competency of $f$
in Defintion \ref{def:cpf}, if $\Lr_{z^*(A',B')}$ is $A'$-$B'$ $f$-connected,
we then call all the $A'$-$B'$ transition paths with edges in $\Lr_{z^*(A',B')}$
  the  {\bf $A'$-$B'$ dominant transition paths}.
The $A$-$B$ dominant transition paths
 are simply called   the  dominant transition paths.
\end{definition}

 In   our problem about the periodic orbit $\po=(\pop_i)_{i=1,\cdots,T}$,
 the set $A$ is $\cup_{i=1}^T A_i $ where  $A_i= [\pop_i-\delta_a, \pop_i+\delta_a]$.
 Note that the following important fact from \eqref{eqn:cap},
 \[z^*(A,B)=\max_i z^*(A_i,B).\] Therefore,
 we propose to make use of the capacities  $z^*(A_i,B)$ for $1\le i \le T$
 to compare the instability of each $\pop_i$.
The point  $\pop_{\hat{i}}$  such that $z^*(A_{\hat{i}},B)=\max_i z^*(A_i,B) $ is defined
as
the {\bf  maximum competency periodic point} (MCPP).
The interpretation of this MCPP is that
there exists a transition path emitting from this MCPP (more precisely, its window $A_{\hat{i}}$)
whose competency is larger than any transition path emitting from any other periodic point.
Thus this MCPP is deemed as the most active (least stable)  periodic point
in the noise-induced transition from $A$ to $B$.
If this MCPP $\pop_{\hat{i}}$ is unique, then all the dominant transition paths
will start from $A_{\hat{i}}$.
In case that  the maximizers  are not unique,
the capacities $z^*(A_i, B)$ still can in general  give a rank in terms of stochastic instability for all periodic points $\po=(\pop_i)$.

In the community  of graph algorithms and  network optimization,
the dominant transition path   is called   the
widest path, also known as the
bottleneck shortest path   or the maximum competency path.
There are plenty of practical  algorithms to find the widest path \cite{Ahuja1993NFT}.
In what follows, we discuss the identification
of the $A$-$B$ competency  $z^*$ and the dominant transition paths.
The motivation here is not to present the details of the practical implements for discrete state space,
but to demonstrate the concepts and  the related theoretical properties in the continuos space.

It is easily seen from \eqref{eqn:cap} that $z^*>0$.
On the other hand, for $z>\sup_{\dom \times \dom} f$,  $L_z$ is empty.
So,    the  competency $z^*$ of  $f$
satisfies $0 < z^{*}  \le \sup_{\dom \times \dom} f < \infty$.
The following properties are obvious: (1)
  If $ \Lr_{z_1}$ is   connected, then so is  $\Lr_{z_2}$
for any $z_2<z_1$;
(2) $\Lr_z$ is     connected  for any $0 < z < z^{*}$;
(3)  $\Lr_z$ is not connected for any  $z>z^*$.
So, one can use a binary search algorithm to compute   the competency $z^*$ of $f$
 within the interval $(0 , \sup_{(x,y)\in\dom\times\dom} \, f(x,y) ]$.
 Then the numerical result for  $z^*$ is   a tiny interval $[z^*_l, z^*_u]$
 bracketing the true value $z^*$.
To judge a given set
$\Lr_z$  is   $A$-$B$ $f$-connected or not,
we can use the following set-to-set map $\Phi_z$ to propagate the set $A$ until reach $B$ if it is reachable.
 The map $\Phi_z$  provides a set-tracking  algorithm
 to search the transition path from $A$ to $B$. The idea is   the analogue
 of the breadth-first  search algorithm.
The same procedure is used  to test  every $A_i$-$B$ $f$-connection
in order to identify  $z^*(A_i,B)$. Actually, since $A=\cup A_i$,
the set-tracking is performed in parallel for all $A_i$.

\begin{definition}
For any $z>0$, we can define the map $\Phi_z$ on  the collection of all subsets of $\dom$ by
$$
\Phi_{z}(C) = :\cup_{x\in C} \  \{y:  (x,y)\in L_z\},\qquad \forall C \subset \dom.
$$
Denote the compound mapping by
$$
\Phi^m_{z}(C) := \Phi_z(\Phi^{m-1}_z(C))
$$
and  $\Phi_{z}^{0}(C) := C$ by default.
\end{definition}

Let
\[ N(z) := \min\{n\geq 1: \Phi^{n}_{z}(A) \cap B \neq \emptyset \} \]
be the minimal length of the $A$-$B$ transition paths in $\mathbb{G}(\Lr_z)$,
then $N(z)<\infty$ if and only if $\Lr_z$ is   $A$-$B$ connected.

To avoid the  technicality and  ease the presentation, we  theoretically  assume  that
$\Lr_{z^*}$ is  $A$-$B$ $f$-connected, i.e.,
$z^*$ is  the maximizer in \eqref{eqn:z*}.
Numerically,
 we   check for $z$ slightly below  the numerical value $z^*_l$,
 and if for all these $z$'s, they share  exactly the same $N(z)$
 and the set $\Phi^{N(z)}_{z}(A)\cap B$ converges as $z$ approaches $z^*_l$,
 then we are able to
use the obtained numerical value $z^*_l$
  as the competency of $f$ defined in \eqref{eqn:z*}.

\subsection{Dominant transition path and dynamical bottleneck}
Calculating
 the $A_i$-$B$ competency, $z^*(A_i,B)$, suffices for
 quantifying the stochastic instabilities of    the periodic points.
 In the following last part of this section, we further discuss some additional issues about
 finding  the  $A$-$B$ dominant  transition paths
 since such paths can give us more details and insights
 of the transition mechanism, especially how the
 periodic points compete in winning the global competency $z^*$.

First we  define  a pull back operation. 
\begin{definition}
Given $z \leq z^*$ and $n\geq N(z)$,  let   \[W_{z}^{n,n} := \Phi_{z}^{n}(A)\cap B,\]
if this set is nonempty. And
   for $0\le i < n$, define recursively,
\[
\begin{split}
W^{n,i}_{z} &:= \{x\in \Phi^{i}_{z}(A):\Phi_{z}(\{x\})\cap W^{n,i+1}_{z}\neq \emptyset \} .
\end{split}
\]
\end{definition}

For  any $\pw=(\omega_0,\omega_1,\cdots,\omega_n)$,
define   the canonical  projection  $\pi_i: \pw \mapsto \omega_i$.
Then we have the following property about the above set $W^{n,i}_z$.
\begin{proposition}
For any $z\in (0,z^*]$, $n\geq N(z)$,   and $ 0 \leq i \leq n$,
then
\[  W^{n,i}_z = \pi_i  \left( \mathbb{G}_{n}(\Lr_z)\right),
~~~\]
which  is to say
\begin{enumerate}
\item for any $\alpha \in W^{n,i}_z$,
there exits a transition path $\pw
=(\omega_0,\omega_1,\cdots,\omega_n) \in \mathbb{G}_n(\Lr_{z})$ with length $n$
and $\omega_i = \alpha$.
\item for any $\pw = (\omega_0,\omega_1,\cdots, \omega_{n}) \in \mathbb{G}_{n}(\Lr_z) $,
$\omega_i\in W^{n,i}_z$ for all $0\le i\le n$.
\end{enumerate}
\label{PSI}
\end{proposition}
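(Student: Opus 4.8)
The plan is to reduce the whole statement to one elementary fact about the iterated set--map $\Phi_z$ and then run a downward induction on $i$. The elementary fact is: for every $i\ge 0$, a point $y$ belongs to $\Phi^i_z(A)$ if and only if there is a chain $(\omega_0,\dots,\omega_i)$ with $\omega_0\in A$, $\omega_i=y$ and $(\omega_j,\omega_{j+1})\in\Lr_z$ for all $0\le j\le i-1$; this is immediate by induction on $i$ from $\Phi_z(C)=\cup_{x\in C}\{y:(x,y)\in\Lr_z\}$. Since we are given $z\in(0,z^*]$, an edge $(x,y)\in\Lr_z$ automatically satisfies $f(x,y)\ge z>0$, so the positivity requirement~(2) of Definition~\ref{def:tp} is free for any chain with edges in $\Lr_z$; such a chain therefore lies in $\mathbb{G}_i(\Lr_z)$ exactly when its endpoint lies in $B$.

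With this the base case $i=n$ is essentially a restatement: for $\alpha\in W^{n,n}_z=\Phi^n_z(A)\cap B$, the elementary fact produces a chain in $\Lr_z$ from $A$ ending at $\alpha\in B$, which is a member of $\mathbb{G}_n(\Lr_z)$ with $n$-th coordinate $\alpha$; conversely, for $\pw\in\mathbb{G}_n(\Lr_z)$ its chain of vertices lies in $\Lr_z$, so $\omega_n\in\Phi^n_z(A)$, and $\omega_n\in B$ by definition of a transition path, hence $\omega_n\in W^{n,n}_z$. For the inductive step I would assume $W^{n,i+1}_z=\pi_{i+1}(\mathbb{G}_n(\Lr_z))$ and establish both inclusions for $i$. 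The inclusion $\pi_i(\mathbb{G}_n(\Lr_z))\subseteq W^{n,i}_z$ is easy: given $\pw\in\mathbb{G}_n(\Lr_z)$, its prefix $(\omega_0,\dots,\omega_i)$ certifies $\omega_i\in\Phi^i_z(A)$, while $\omega_{i+1}\in\Phi_z(\{\omega_i\})$ because $(\omega_i,\omega_{i+1})\in\Lr_z$ and $\omega_{i+1}\in W^{n,i+1}_z$ by the induction hypothesis, so $\Phi_z(\{\omega_i\})\cap W^{n,i+1}_z\ne\emptyset$ and $\omega_i\in W^{n,i}_z$ by the recursive definition.

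The real work is the reverse inclusion $W^{n,i}_z\subseteq\pi_i(\mathbb{G}_n(\Lr_z))$, a splicing argument. Given $x\in W^{n,i}_z$, choose $\beta\in\Phi_z(\{x\})\cap W^{n,i+1}_z$; by the induction hypothesis there is $\pw'\in\mathbb{G}_n(\Lr_z)$ with $\omega'_{i+1}=\beta$, and since $x\in\Phi^i_z(A)$ the elementary fact gives a chain $(\omega_0,\dots,\omega_{i-1},x)$ in $\Lr_z$ starting in $A$. Splice these into $\pw:=(\omega_0,\dots,\omega_{i-1},x,\beta,\omega'_{i+2},\dots,\omega'_n)$ and check the three defining properties of a member of $\mathbb{G}_n(\Lr_z)$: the length is exactly $n$; $\omega_0\in A$ and $\omega_n=\omega'_n\in B$; every consecutive pair lies in $\Lr_z$, those up to $(\omega_{i-1},x)$ from the chain, the junction pair $(x,\beta)$ from $\beta\in\Phi_z(\{x\})$, and $(\beta,\omega'_{i+2})=(\omega'_{i+1},\omega'_{i+2}),\dots,(\omega'_{n-1},\omega'_n)$ from $\pw'$. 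Thus $\pw\in\mathbb{G}_n(\Lr_z)$ with $\pi_i(\pw)=x$. This completes the downward induction, and since the equality $W^{n,i}_z=\pi_i(\mathbb{G}_n(\Lr_z))$ is precisely the conjunction of statements~(1) and~(2) of the proposition, nothing more is needed.

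The main obstacle is the bookkeeping inside this splicing step --- keeping the length equal to $n$ (so the spliced path lands in $\mathbb{G}_n$, not merely in some $\mathbb{G}_m$) and checking that no edge outside $\Lr_z$ is created at the two junctions $i-1\to i$ and $i+1\to i+2$; this is exactly why the recursive definition of $W^{n,i}_z$ carries both $x\in\Phi^i_z(A)$ and $\Phi_z(\{x\})\cap W^{n,i+1}_z\ne\emptyset$. One should also note the degenerate regime: a length-$m$ transition path in $\mathbb{G}(\Lr_z)$ exists iff $\Phi^m_z(A)\cap B\ne\emptyset$, so for $n<N(z)$ the identity is vacuous and the substantive content is for $n\ge N(z)$, as stated.
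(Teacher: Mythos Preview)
Your argument is correct and uses essentially the same ingredients as the paper's proof: the characterization of $\Phi^i_z(A)$ as endpoints of $\Lr_z$-chains from $A$, and the recursive definition of $W^{n,i}_z$ to extend forward to $B$. The only difference is organizational --- you package everything as a single downward induction with a splicing step, whereas the paper treats the two inclusions separately and, for the direction $W^{n,i}_z\subseteq\pi_i(\mathbb{G}_n(\Lr_z))$, builds the forward tail $(\omega_i,\omega_{i+1},\dots,\omega_n)$ step by step from the recursion rather than pulling a full path from the inductive hypothesis and splicing.
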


\begin{proof}
(1):
Pick up an arbitrary   $\alpha$ in $W^{n,i}_z$,   let $\omega_i:=\alpha$,
then there exists a point, denoted as $\omega_{i+1}$, in both $\Phi_z(\{\omega_i\})$ and $W^{n,i+1}_z$.
Since $\omega_{i+1} \in W_{z}^{n,i+1}$,   we can inductively  find $\omega_j\in W^{n,j}_{z}\cap \Phi_z(\{\omega_{j-1}\})$ for $i< j\le n$; in particular, $\omega_{n} \in W^{n,n}_{z} \subset B$.
Meanwhile, since $\omega_i\in \Phi^i_{z}(A)$, then there exists
an $\omega_{i-1}$   such that $\omega_{i-1} \in \Phi^{i-1}_{z}(A)$
and $\omega_i \in \Phi_{z}(\{\omega_{i-1}\}) $. From  $\omega_{i-1} \in \Phi^{i-1}_{z}(A)$,
we similarly have $\omega_j \in \Phi^{j}_{z}(A)$ and $\omega_{j+1} \in \Phi_{z}(\{\omega_{j}\})$ for $0\le j< i$; in particular,  $\omega_{0} \in \Phi^{0}_{z}(A) = A$.
Then
 $\pw := (\omega_0,\cdots, \omega_i, \cdots, \omega_{n})$
is the desired transition path.

(2):
Let  $\pw = (\omega_0, \omega_1, \cdots, \omega_{n})$
be a transition path in the    set $\Lr_z$.
Then $\omega_0\in A=\Phi^0_z(A)$.
Note that $f(\omega_i,\omega_{i+1}) \ge z$ for all $0\leq i < n$,
then $\omega_{i+1} \in \Phi_{z}(\{\omega_i\})$.
In particular, $\omega_1\in \Phi_z(\{\omega_0\}) \subset \Phi^{1}_{z}(A)$, and
inductively,  we have  $\omega_i \in \Phi^{i}_{z}(A)$ for $0\le i \le n$. Since $\omega_{n} \in B$, thus we have $\omega_{n} \in W^{n,n}_{z}$.
Then by induction, we obtain from the definition of $W^{n,i}_z$
 that $\omega_i \in W^{n,i}_{z}$ for $0\le i\le n$.
\end{proof}

\begin{definition}
\label{def:btn}
 A  pair
$(x,y)\in \dom \times \dom$ is called
an $A$-$B$ {\bf dynamical bottleneck}, or {\bf dynamical bottleneck} for abbreviation, if
$f(x,y)=z^*$ and
$(x,y)\in W^{n,i}_{z^*} \times W^{n,i+1}_{z^*}$ for
some $n\ge N(z^*)$ and $0 \leq i < n$.
\end{definition}

\begin{proposition}
\begin{enumerate}
\item
If $(x,y)$ is a  dynamical bottleneck, then
there exists a dominant transition  path
$\pw=(\omega_0,\cdots,\omega_{n})$ in $\mathbb{G}(\Lr_{z^*})$, such that
$x= \omega_i$ and $y=\omega_{i+1}$ for some $0 \leq i < n$.
\item
If for  the given set $A$, $B$ and the function $f$,
the bottleneck is unique,  then
every dominant transition path contains
the bottleneck as one of its edges.
\end{enumerate}
\begin{proof}
(1)
From the proof of  Proposition \ref{PSI},
we see that if $x\in W^{n,i}_{z^*}$,
there must exist an $A$-$\set{x}$ transition path
$(\omega_0,\omega_1,\cdots,\omega_i=x)$ with edges in $L_{z^*}$,  and if  $y\in W^{n,i+1}_{z^*}$,
there should be a $\set{y}$-$B$ transition path $(\omega_{i+1}=y,\omega_{i+2},\cdots, \omega_{n})$
with edges in $L_{z^*}$.
Since $(x,y)\in \Lr_{z^*}$, then
putting together the above two pieces,
we obtain $\pw=(\omega_0,\omega_1,\cdots,\omega_i=x,
\omega_{i+1}=y,\omega_{i+2},\cdots, \omega_{n} )$
is a dominant transition path.

(2)
In view of Remark \ref{rmk:mcss}, for every dominant transition path $\pw = (\omega_0, \omega_1, \cdots, \omega_n)$  in $\mathbb{G}(\Lr_{z^*})$,
we have  $\Cp(\pw)=\min_i  f(\omega_i,\omega_{i+1})=z^*$.
Let $i^*=\arg\min_i  f(\omega_i,\omega_{i+1})$, then $f(\omega_{i^*},\omega_{{i^*}+1})=z^*$.
On the other hand, it follows from Proposition  \ref{PSI} that $\omega_{i^*}\in W^{n,i^*}_{z^*}$
and $\omega_{i^*+1}\in W^{n,i^*+1}_{z^*}$.
Hence $(\omega_{i^*}, \omega_{i^*+1})$ is a bottleneck by definition.
Since the bottleneck is unique,
$(\omega_{i^*}, \omega_{i^*+1})$ must be the bottleneck $(x,y)$.
\end{proof}
\end{proposition}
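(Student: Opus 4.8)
The plan is to reduce everything to Proposition~\ref{PSI}, which identifies the pull-back set $W^{n,i}_z$ with the set of $i$-th vertices of length-$n$ transition paths whose edges all lie in $\Lr_z$, together with the last sentence of Remark~\ref{rmk:mcss}, which says that any $A$-$B$ transition path with all edges in $\Lr_{z^*}$ automatically has competency exactly $z^*$ (hence is a dominant transition path).

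For part (1), I would unfold Definition~\ref{def:btn}: a dynamical bottleneck $(x,y)$ satisfies $f(x,y)=z^*$ and $x\in W^{n,i}_{z^*}$, $y\in W^{n,i+1}_{z^*}$ for some $n\ge N(z^*)$ and $0\le i<n$. The goal is to build a single path of length $n$ that uses the edge $(x,y)$ at position $i$. I would produce a left half-path $(\omega_0,\dots,\omega_i=x)$ with edges in $\Lr_{z^*}$ by walking backward through the iterated map $\Phi_{z^*}$, using only that $x\in W^{n,i}_{z^*}\subset\Phi^i_{z^*}(A)$ (this is precisely the backward induction already carried out inside the proof of Proposition~\ref{PSI}). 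I would produce a right half-path $(\omega_{i+1}=y,\dots,\omega_n)$ with edges in $\Lr_{z^*}$ and $\omega_n\in W^{n,n}_{z^*}\subset B$ by forward induction through the recursive definition of $W^{n,\cdot}_{z^*}$, using only that $y\in W^{n,i+1}_{z^*}$. Finally, since $f(x,y)=z^*$ gives $(x,y)\in\Lr_{z^*}$, the concatenation $\pw=(\omega_0,\dots,x,y,\dots,\omega_n)$ is an $A$-$B$ transition path of length $n$ with all edges in $\Lr_{z^*}$, so it is a dominant transition path through $(x,y)$ by Remark~\ref{rmk:mcss}. Alternatively one may apply Proposition~\ref{PSI}(1) verbatim at the index $i$ and simply note that the greedy choice $\omega_{i+1}=y$ is admissible because $y\in\Phi_{z^*}(\{x\})\cap W^{n,i+1}_{z^*}$.

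For part (2), I would take an arbitrary dominant transition path $\pw=(\omega_0,\dots,\omega_n)$, so every edge lies in $\Lr_{z^*}$. By Remark~\ref{rmk:mcss}, $\Cp(\pw)=\min_{0\le j<n}f(\omega_j,\omega_{j+1})=z^*$; pick $i^*$ attaining this minimum, so $f(\omega_{i^*},\omega_{i^*+1})=z^*$. Proposition~\ref{PSI}(2) then gives $\omega_{i^*}\in W^{n,i^*}_{z^*}$ and $\omega_{i^*+1}\in W^{n,i^*+1}_{z^*}$, which is exactly the definition of a dynamical bottleneck. By the uniqueness hypothesis $(\omega_{i^*},\omega_{i^*+1})$ must coincide with $(x,y)$, so $\pw$ contains the bottleneck as an edge.

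I expect no real obstacle here: the argument is almost entirely bookkeeping built on Proposition~\ref{PSI} and Remark~\ref{rmk:mcss}. The only place that needs a little care is the concatenation in part (1) --- keeping the left half-path of length $i$, the gluing edge, and the right half-path of length $n-i-1$ consistent so that the result lands in $\mathbb{G}_n(\Lr_{z^*})$ --- and checking that the gluing edge $(x,y)$ genuinely belongs to $\Lr_{z^*}$, which is immediate from $f(x,y)=z^*$.
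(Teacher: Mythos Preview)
Your proposal is correct and follows essentially the same route as the paper: part (1) builds the two half-paths from the proof of Proposition~\ref{PSI} and glues them along $(x,y)\in\Lr_{z^*}$, and part (2) picks the edge realizing the minimum competency, invokes Proposition~\ref{PSI}(2) to place its endpoints in the appropriate $W^{n,\cdot}_{z^*}$ sets, and concludes by uniqueness. The only cosmetic difference is that you explicitly cite Remark~\ref{rmk:mcss} to certify that the concatenated path in (1) is dominant, which the paper leaves implicit.
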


For the situations that the $A$-$B$  dynamical bottleneck
is   unique, which is denoted as $\Btn(A,B)=(\Btn^-(A,B),\Btn^+(A,B))$,
  we can  furthermore recursively  investigate
how the  dominant transition  paths
leave the set $A$ and reach the bottleneck $\mathbb{B}(A,B)$.
For example, we can define the bottleneck
$\Btn(A, \Btn^-(A,B))$
for the transition from $A$ to  $\Btn^-(A,B)$, i.e.,
taking  $\Btn^-(A,B)$ as $B'$.
If this bottleneck is also unique, we can continue to trace
the nested bottlenecks $\Btn(A, \Btn^-(A, \cdots))$  back
to some point in the set $A$.
The final point obtained in this recursive way
in the set $A$ is just
the MCPP we defined before.

\subsection{Comments on two criteria  of MPLP and MCPP}
\label{ssec:two}
It is normal that our two criteria
in Section \ref{ssec:mfp} and Section \ref{ssec:cap}
can give rise to different results in describing the stochastic
instabilities of  the same periodic point
in regard of different criteria  used.
The first criterion of looking for MPLP is to compare the
total out-flow of the reactive current from a periodic point.
 The second criterion of looking for MCPP is
to compare the   competency of the ``pipelines"  from a periodic point
in
transporting the reactive current to the destination $B$.
So,  it is quite reasonable that
the total flow is huge but the competency of each individual pipeline
is actually small, or the vice versa.  In a nutshell, the MPLP is for the collective behavior
of all pipelines while the MCPP is about
where the pipeline with the widest bottleneck lies.

 \section{Application to the random logistic map}
\label{sec:num}
We are now in the position to apply the above method
based on the TPT to the logistic map
for the set $A$ and $B$ specified  in Section \ref{sec:log}.
The first result is for a fixed value $\alpha=3.2$, at which
a stable period-2 orbit exists.
We shall show the numerical values of the  $A$-$B$ reactive probability current
 $J$ and the analysis of the MPLP,  MCPP and dominant transition paths.
Then,  by changing  various parameter
$\alpha$ and the noise amplitude $\sigma$,
we study how these quantities change
to affect   the individual points in one periodic orbit.
  During the discussion, we also show
some validation work
for the consistence with the  direct simulation
and the robustness with respect to $\delta_a$ and $\delta_b$.

\subsection{Results for the period-2 case}

\subsubsection{Basic quantities}

 {\it (1) invariant measure $\pi$:}
 Pick up  $\alpha = 3.2$ as an example first.
The stable period-2 orbit in this case is  $\po=(\pop_1,\pop_2) = (0.5130,0.7995)$.
The invariant measure $\pi$ at  $\sigma = 0.04$ is shown in Figure \ref{fig:pi2},
where the two peaks correspond to the
locations of  $\pop_1 $ and $\pop_2$.
It is seen that $\pi(\pop_1)<\pi(\pop_2)$, which implies that
the periodic point $\pop_2$ on the right  has higher probability  at equilibrium.
The same result  $\pi(\pop_1)<\pi(\pop_2)$ for the two periodic points $\pop_1<\pop_2$
is observed for all values of $\alpha$ between [3.02, 3.4].
Actually, when $\alpha$ increases in this interval, so does
the ratio $\pi(\pop_2)/\pi(\pop_1)$.

Figure \ref{fig:pi3} shows the invariant measure for a period-3 example
at $\alpha=3.83$. The period-3 orbit is  $\po=(\pop_1,\pop_2,\pop_3)=(0.1561, 0.5047, 0.9574)$.
To show the three peaks for this periodic orbit, a smaller $\sigma=0.008$ is set.
It is shown here that  the peak at $\pop_3=0.9574$ is   dominantly large.

\begin{figure}[htbp]
	\begin{subfigure}[b]{.45\textwidth}
	
	\includegraphics[width=\textwidth]{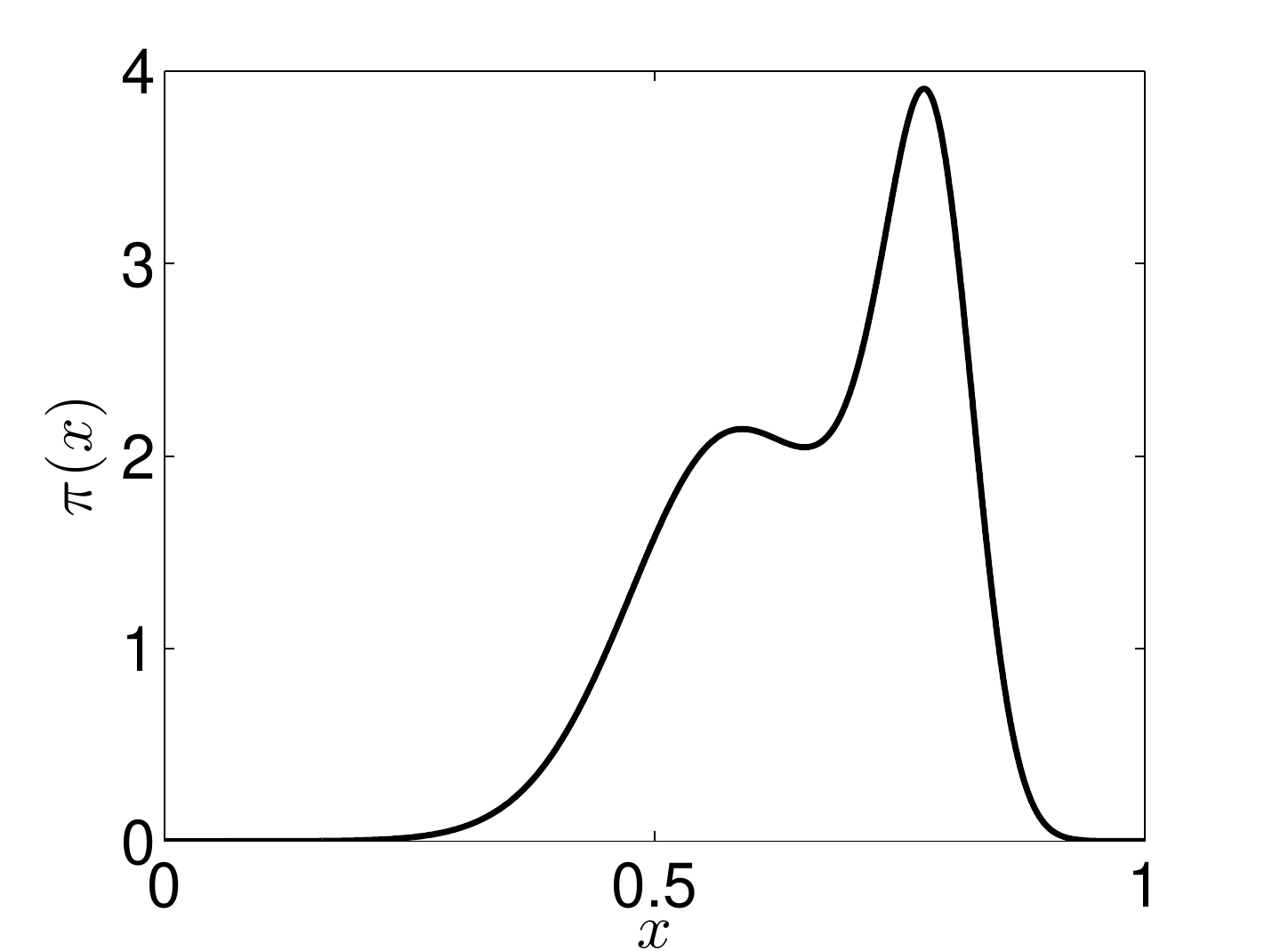}
	\caption{   $\pi(x)$ at a  period-2 case.}
	\label{fig:pi2}
	\end{subfigure}
	~
	\begin{subfigure}[b]{0.45\textwidth}
		\includegraphics[width=\textwidth]{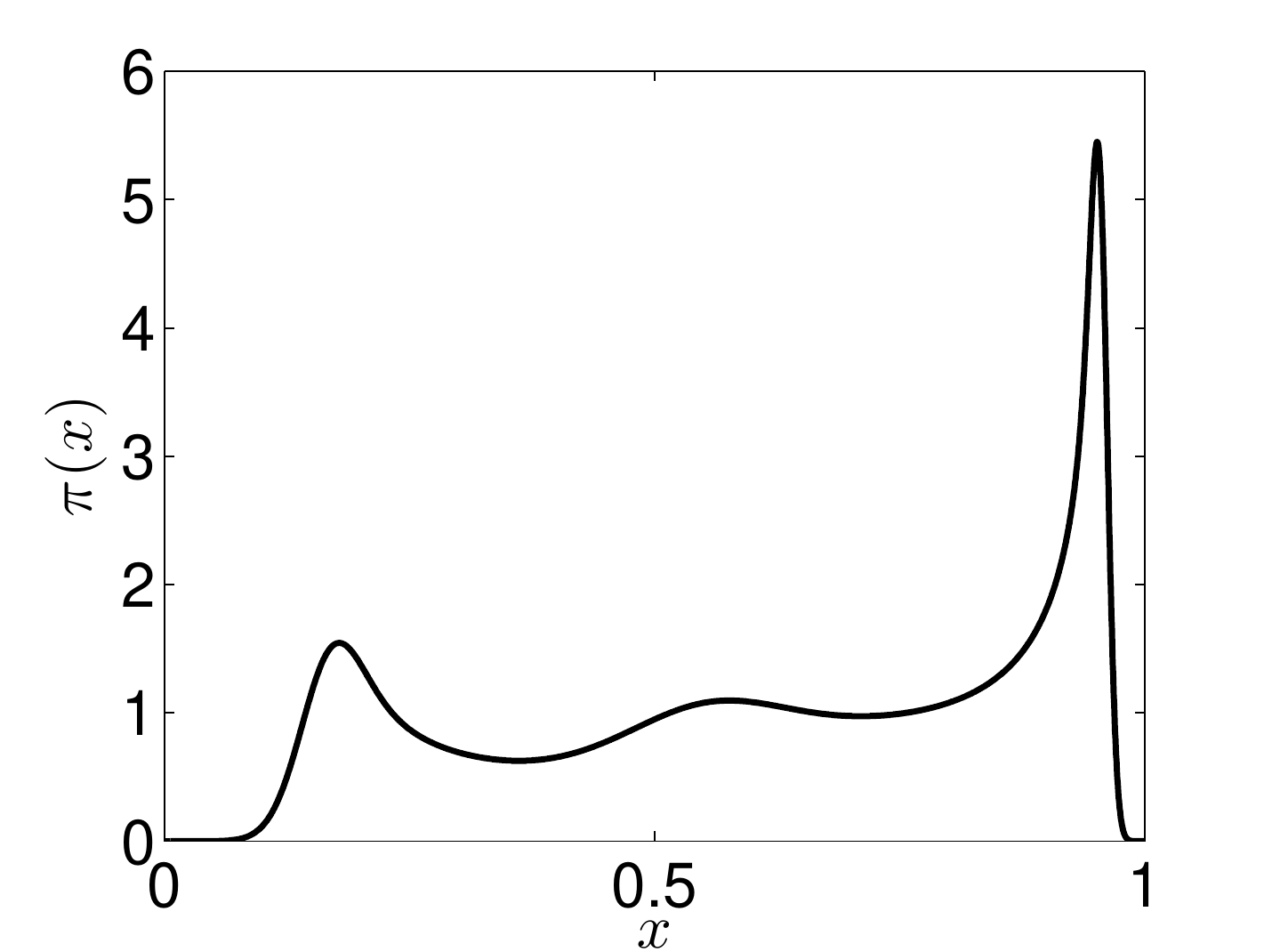}
	 	\caption{  $\pi(x)$ at a period-3 case.}
		\label{fig:pi3}
	\end{subfigure}
	\caption{The invariant probability density $\pi(x)$ for (A) period-2 case and (B) period-3 case. The parameters are (A) $\alpha = 3.2$, $\sigma = 0.04$, (B) $\alpha = 3.83$, $\sigma = 0.008$.}
	\label{fig:pi}
\end{figure}

\begin{figure}[h!]
	\begin{subfigure}[b]{0.48\textwidth}
		\includegraphics[width=\textwidth]{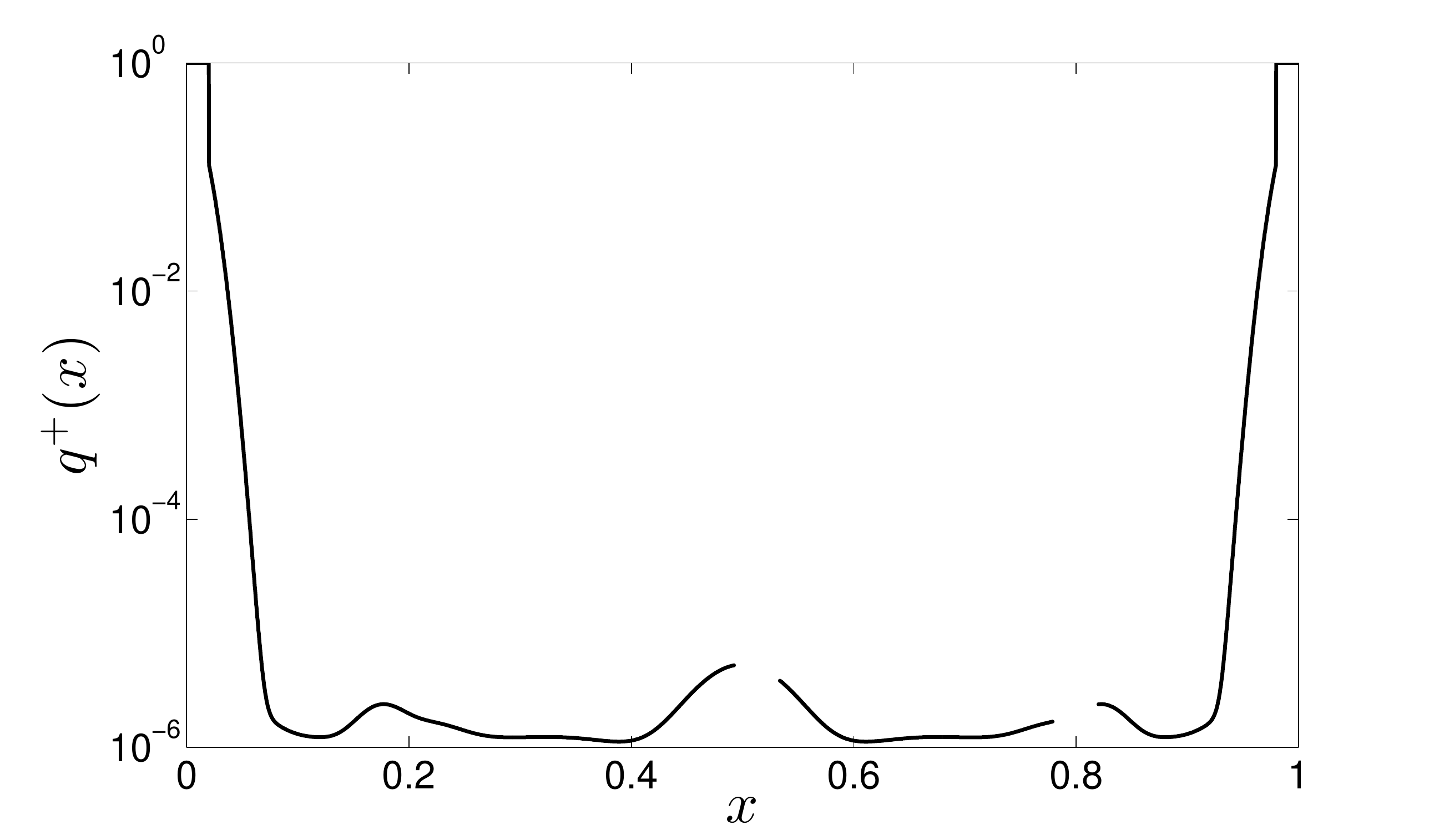}
		\caption{$  q^+(x)$ from solving \eqref{eqn:q+}. }
\label{fig:fbcf-A}
	\end{subfigure}
	\begin{subfigure}[b]{0.48\textwidth}
		\includegraphics[width=\textwidth]{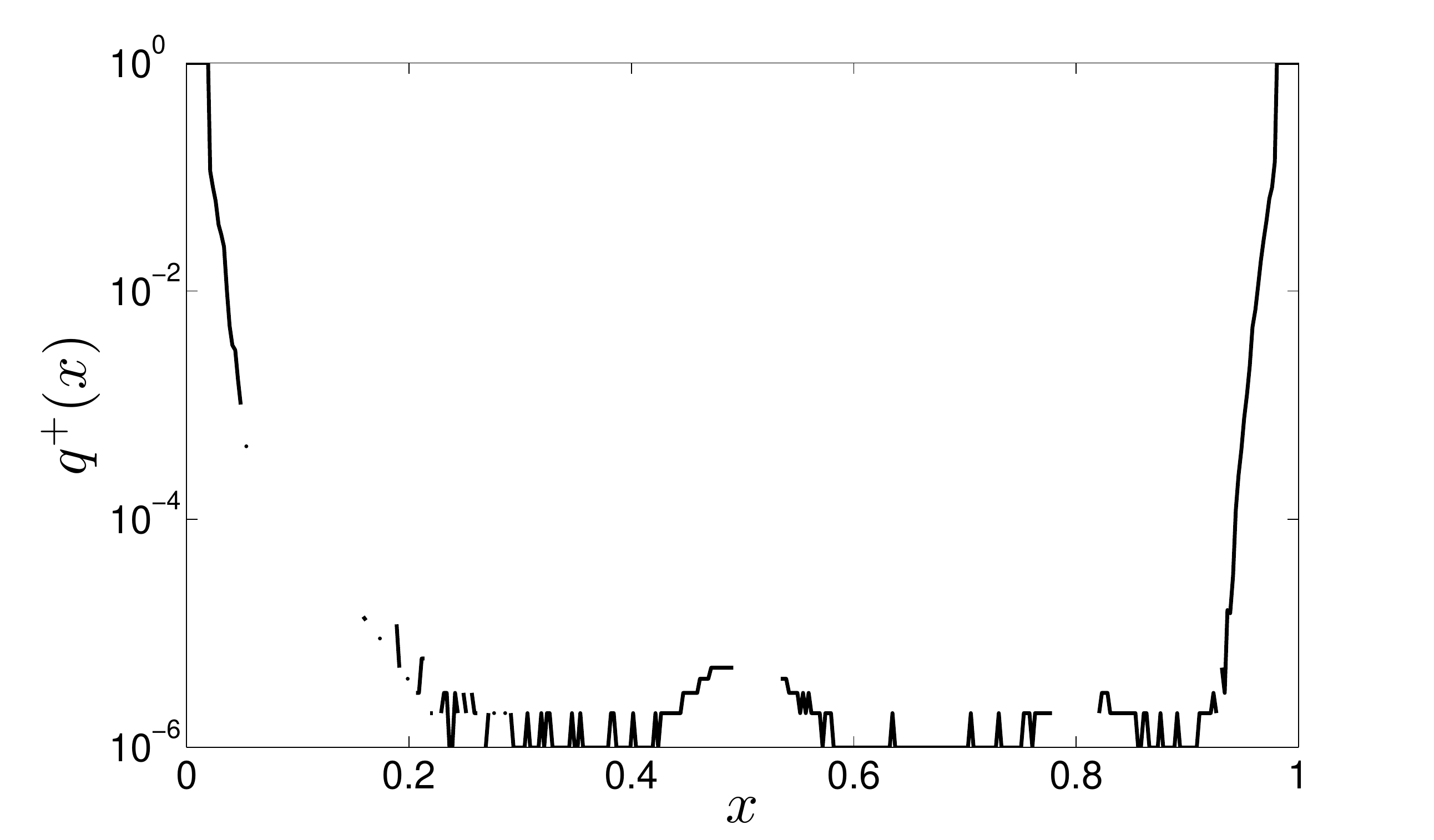}
		\caption{$ q^+(x)$ from direct simulation.}	
		\label{fig:fbcf-B}
	\end{subfigure}
	 	\begin{subfigure}[b]{0.48\textwidth}
		\includegraphics[width=\textwidth]{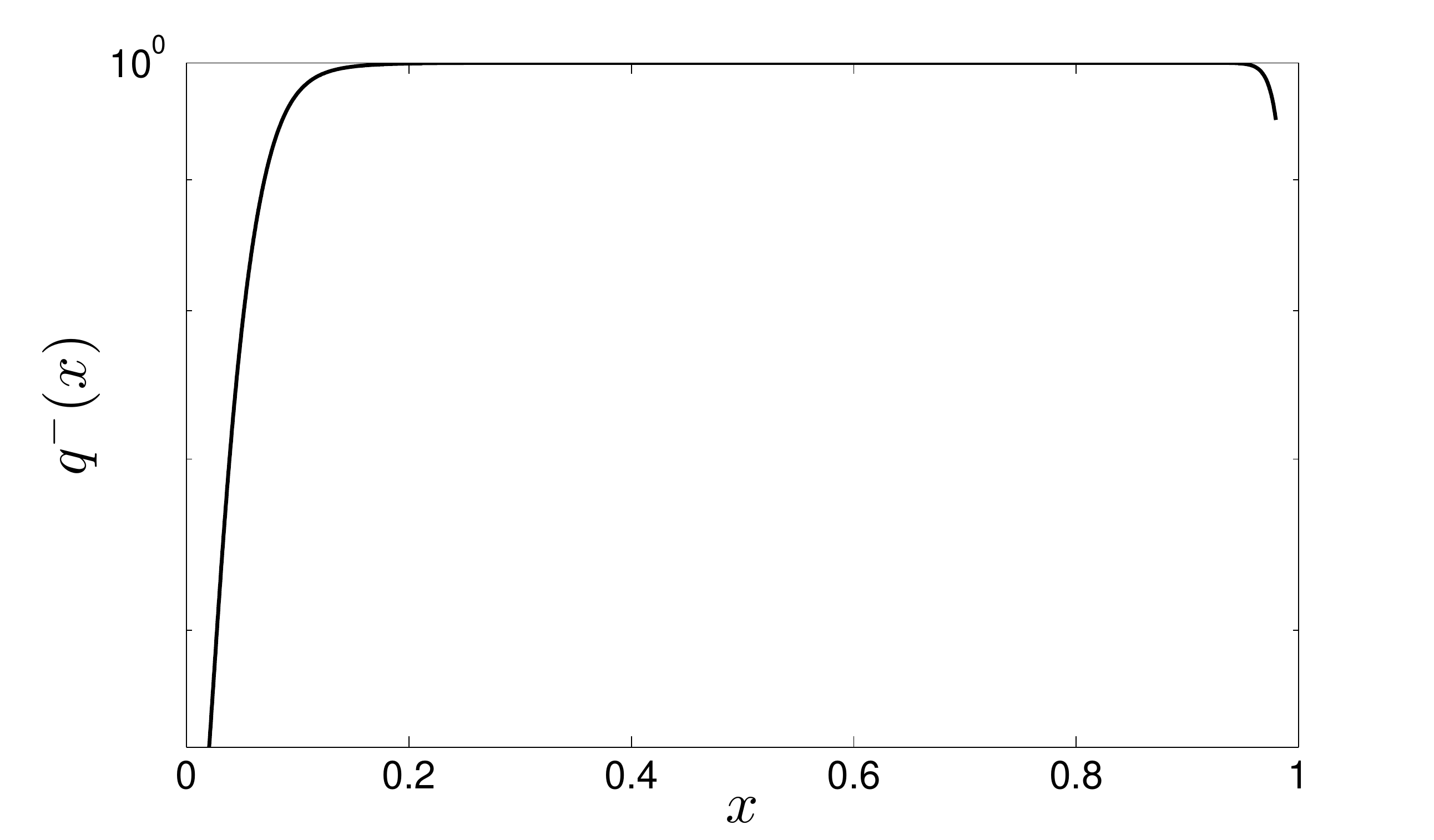}
		\caption{$   q^-(x)$ from solving \eqref{eqn:q-}.}
		\label{fig:fbcf-C}
	\end{subfigure}
	\begin{subfigure}[b]{0.48\textwidth}
		\includegraphics[width=\textwidth]{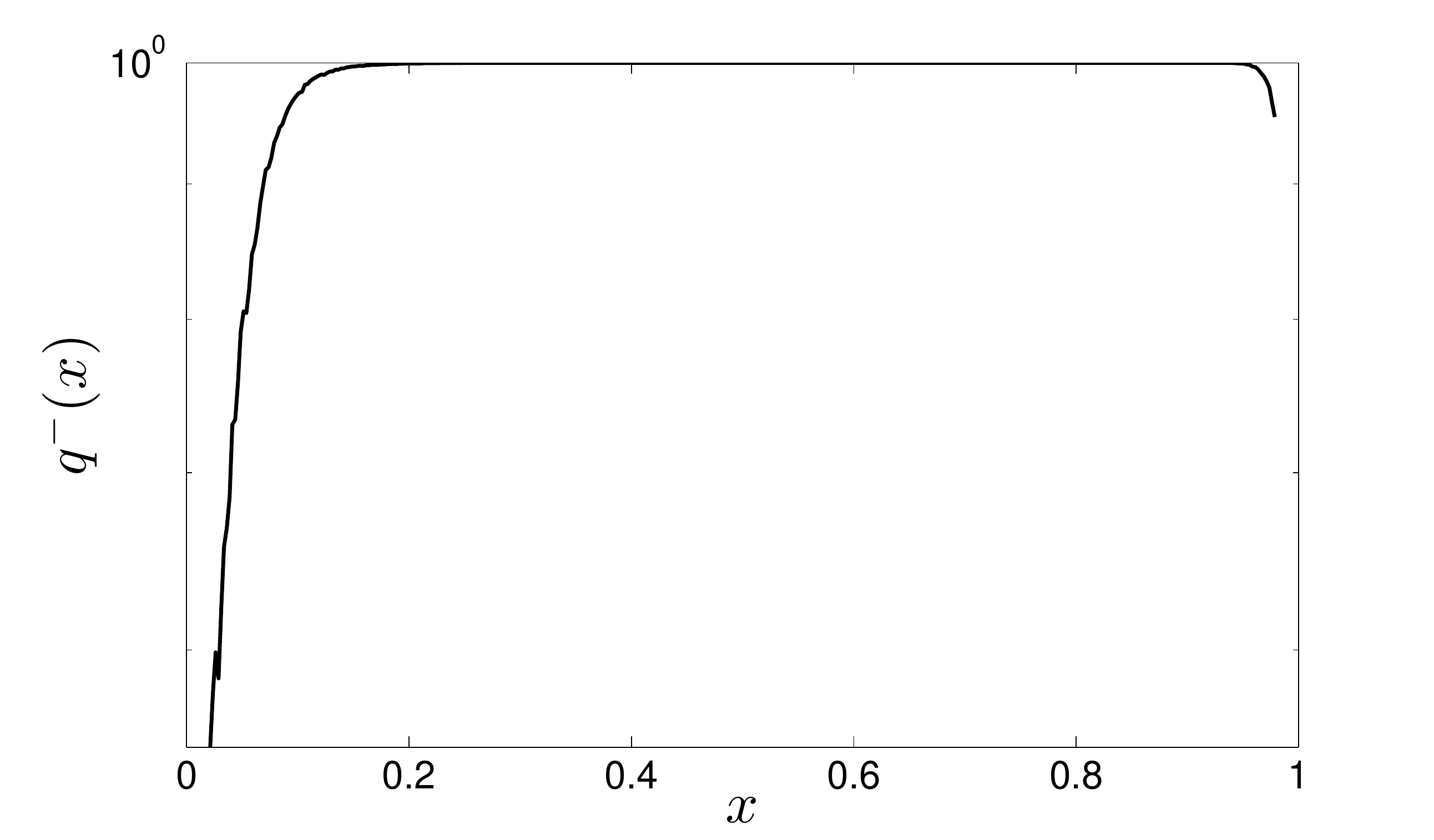}
		\caption{ $  q^-(x) $ from direct simulation. }
		\label{fig:fbcf-D}
	\end{subfigure}
	\caption{ The logarithmic plots of
	the forward committor function  (A, B) and backward committor function (C, D).   The parameters are $\alpha = 3.2$, $\sigma = 0.04$, $\delta_a = \delta_b = 0.02$.}
	\label{fig:fbcf}
\end{figure}

{\it (2) Committor  functions.}
 Choose the sets $A$ and $B$ as in \eqref{setA} and \eqref{setB}
 with  $\delta_a = \delta_b = 0.02$.
 The forward committor function  $q^+$  and backward committor function
$q^-$ at    $\sigma = 0.04$ ($\alpha=3.2$)
 are plotted  in Figure \ref{fig:fbcf}
at  the logarithmic scale.
As a comparison to    the solutions
obtained from the  finite difference scheme for   \leqref{eqn:q+} and \leqref{eqn:q-}
with $10^4$ grid size, shown  in the subplot Figure \ref{fig:fbcf-A}   and \ref{fig:fbcf-C},
the  same  committor functions
in Figure \ref{fig:fbcf-B}   and \ref{fig:fbcf-D}
are calculated
 from the statistical average
 of  a long trajectory by brute-force simulation of
 the random logistic mapping.
 The total simulation time step  is $2\times 10^{10}$
 (i.e., $N=10^{10}$ in \leqref{eqn:rate}), during which the number of successful transitions
 from $A$ to $B$ is $12238$.
 Thus the transition rate obtained from direct simulation is $6.119\times 10^{-7}$.
 The transition rate calculated from the equation \eqref{eqn:rate} is $6.008\times 10^{-7}$.

It should be emphasized  that the committor functions are not continuous
at the boundary of the sets $A $ and $B$.
The forward committor function does not even change monotonically
from $1$ to $0$. These special features
come from the nature of the discrete-time dynamical system.

\begin{figure}[htbp]
	\begin{subfigure}[b]{0.49\textwidth}
		\includegraphics[width=\textwidth]{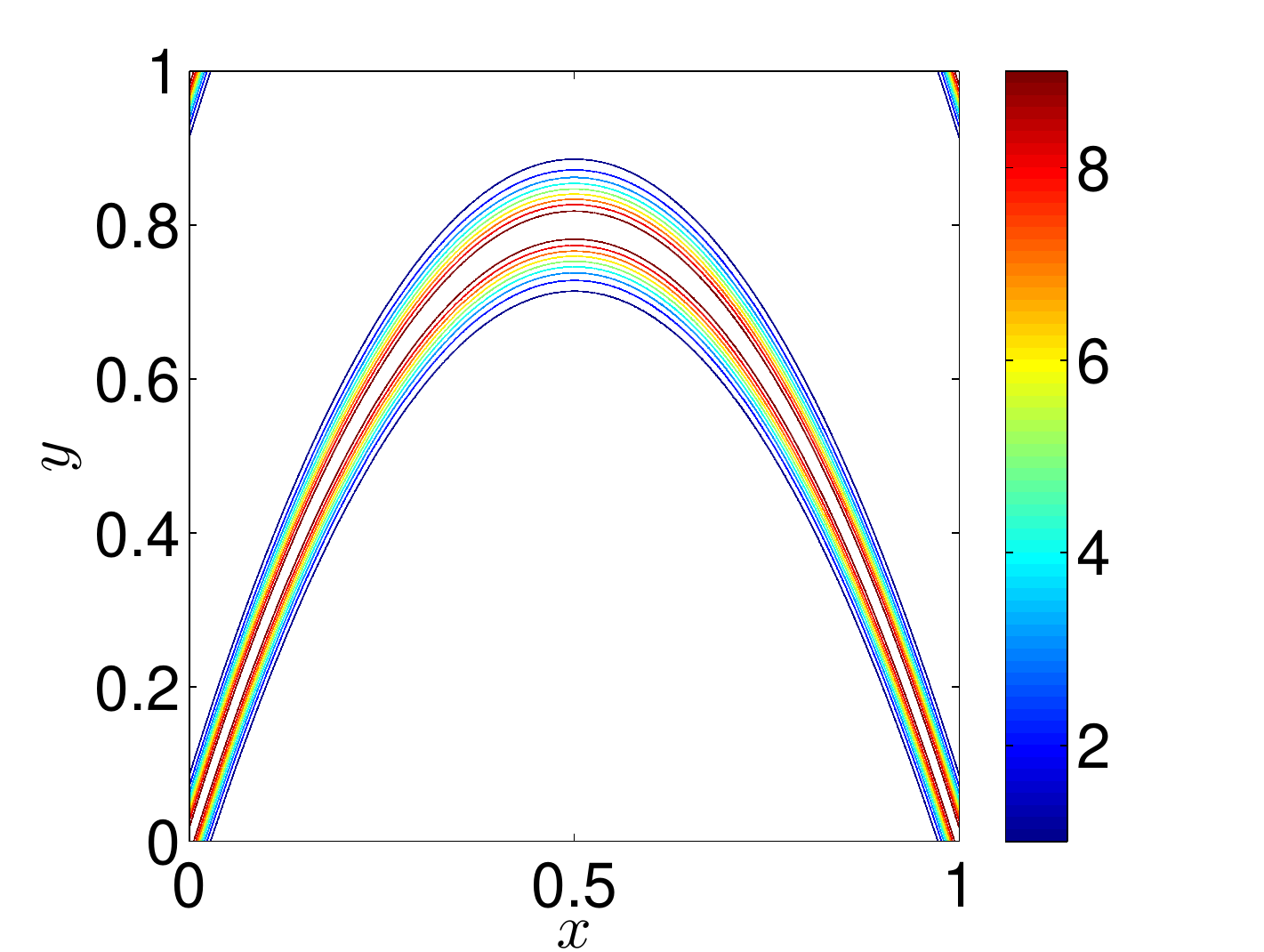}
		\caption{ $P(x,y)$.}\label{fig:P-pi-J-A}
	\end{subfigure}
	~
	\begin{subfigure}[b]{0.49\textwidth}
		\includegraphics[width=\textwidth]{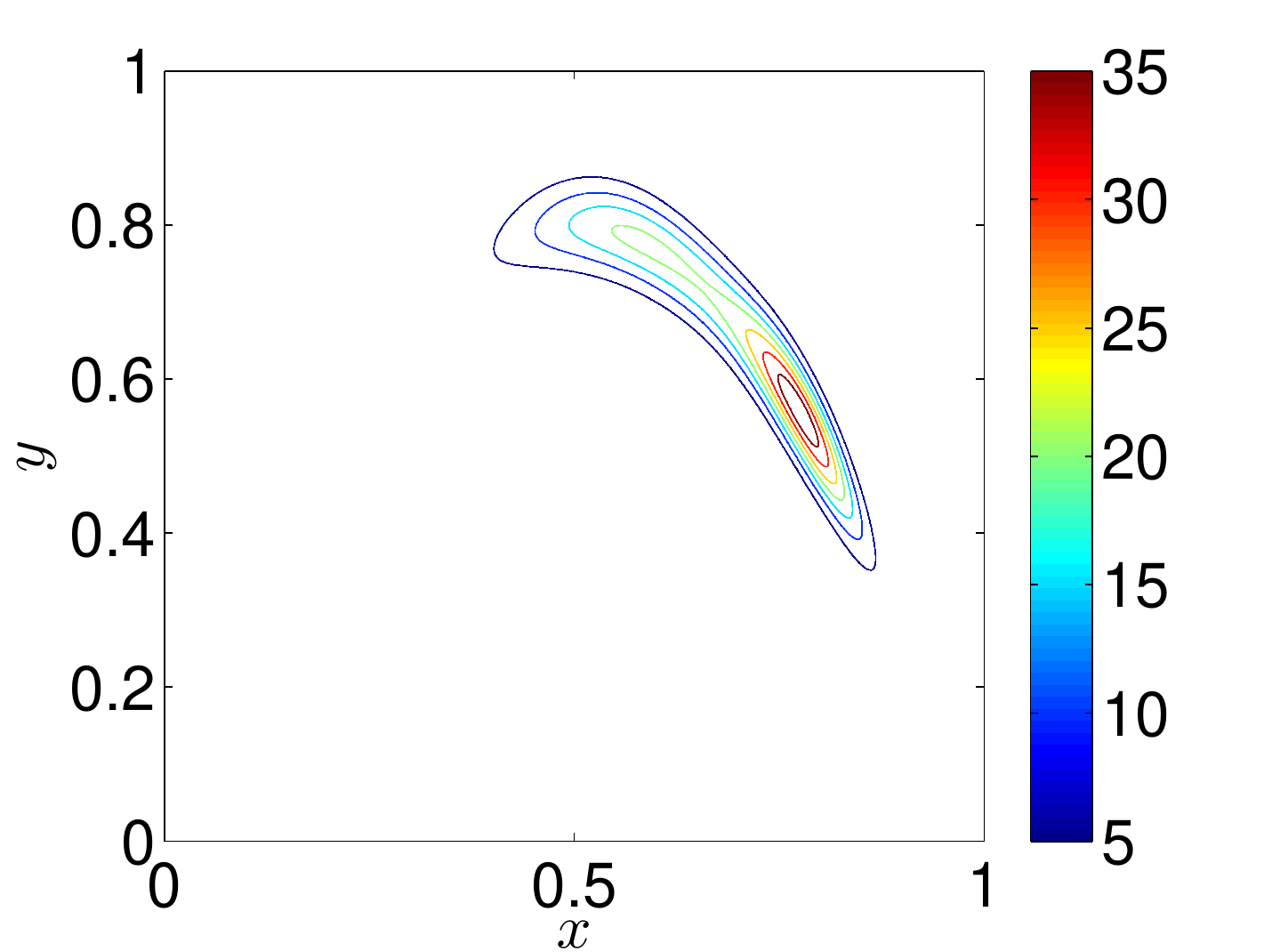}
		\caption{$\pi(x)P(x,y)$.}\label{fig:P-pi-J-B}
	\end{subfigure}

	\begin{subfigure}[b]{0.49\textwidth}
		\includegraphics[width=\textwidth]{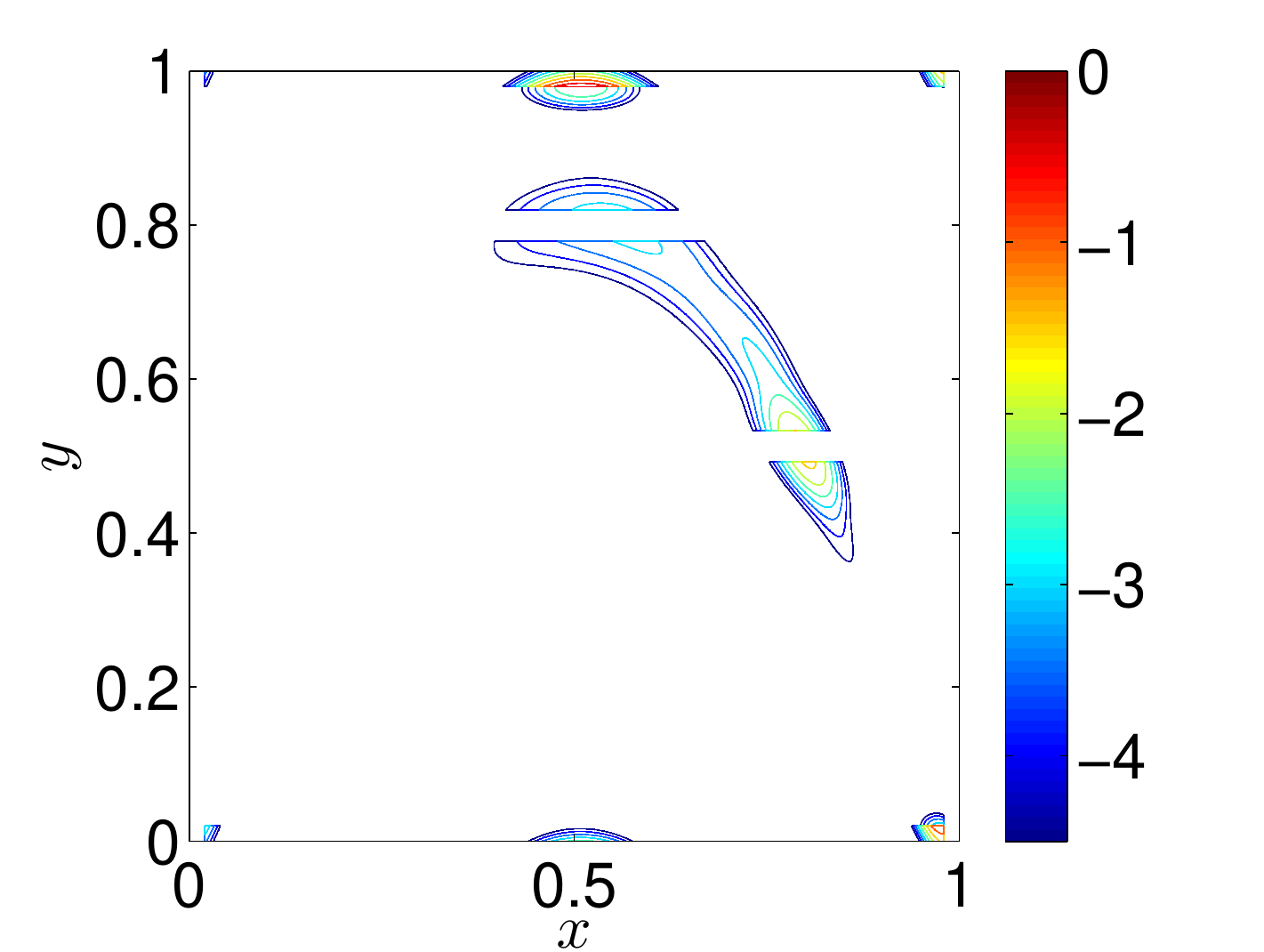}
		\caption{ $J(x,y)$}\label{fig:P-pi-J-C}
	\end{subfigure}
	~
	\begin{subfigure}[b]{0.49\textwidth}
		\includegraphics[width=\textwidth]{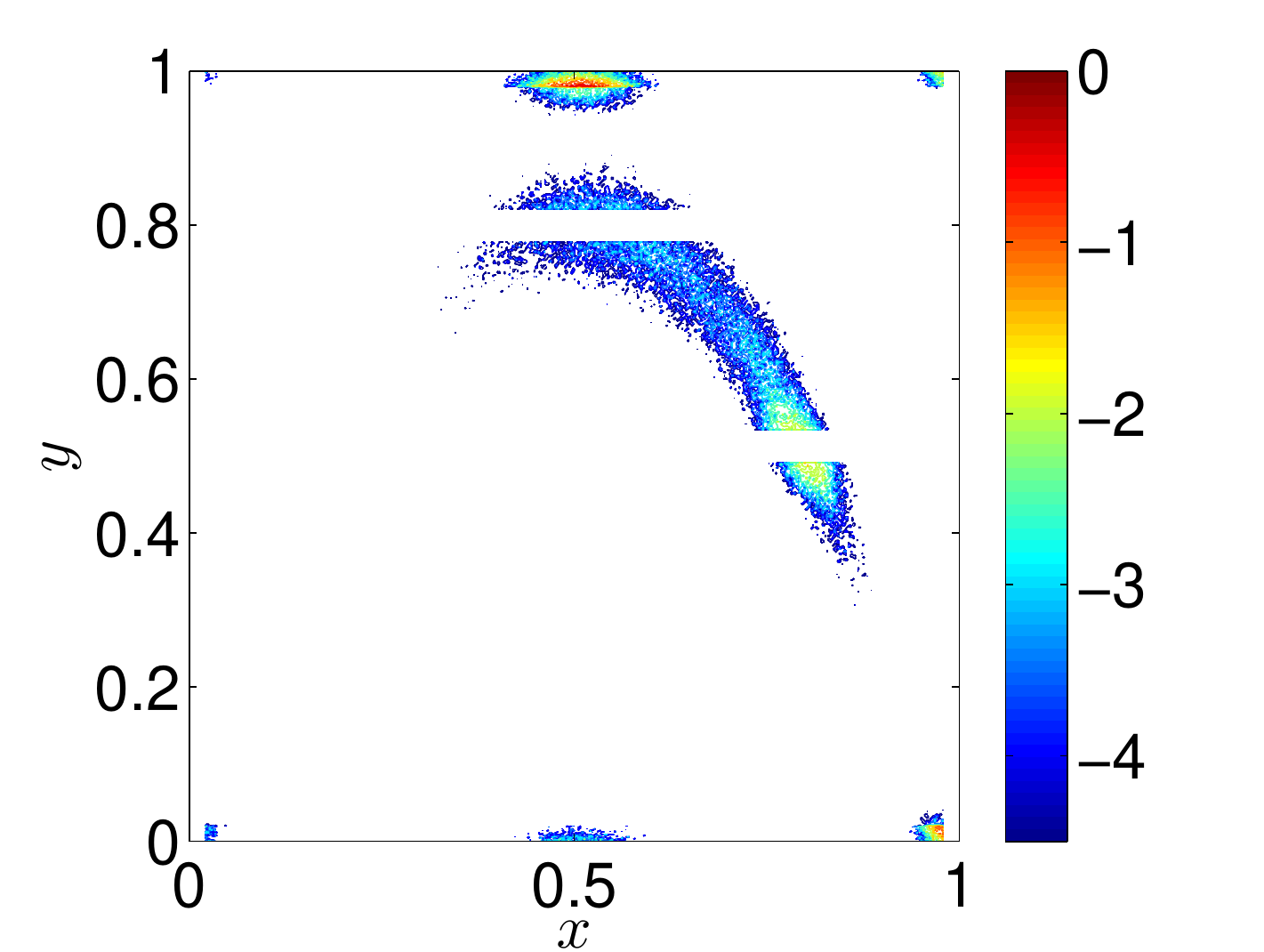}
		\caption{empirical $J(x,y)$.}	\label{fig:P-pi-J-D}
	\end{subfigure}

	\caption{ The plots of the transition kernel $P(x,y)$, the PDF flux $\pi(x)P(x,y)$ used in \cite{Billings::virus2002} and  the $A$-$B$ reactive current  $J(x,y)=\pi(x)P(x,y)q^-(x)q^+(y)$. The contour plots for  $J$ in subplots (C) and (D)  are
	actually  for the value   $\log(J(x,y)/M)$ where $M = \max_{x,y\in S} J(x,y)$ for
	    visualization.
	 The parameters are $\alpha = 3.2$, $\sigma = 0.04$, $\delta_a = \delta_b = 0.02$.
	 ($M=6.5186\times 10^{-4}$).}
	\label{fig:P-pi-J}
\end{figure}

{\it (3) $A$-$B$ reactive current.}
The transition kernel $P(x,y)$ is  shown in
Figure \ref{fig:P-pi-J-A}.
Figure \ref{fig:P-pi-J-B} plots $\pi(x)P(x,y)$,
which is the so-called ``PDF flux" in
 \cite{Billings::virus2002}.
 The $A$-$B$ reactive current in the TPT for our use,
   shown in Figure \ref{fig:P-pi-J-C},
 was calculated from \leqref{eqn:J}
 via solving \leqref{eqn:q-t} and \leqref{eqn:q+}
 by finite difference method.
 Figure \ref{fig:P-pi-J-D} is the empirical result from
 the direct simulation, which shows that
 our calculation is reliable.

\subsubsection{Stochastic instability comparison at $\alpha=3.08$}
We fix $\sigma = 0.04$ for the following discussion
about the transition mechanism at $\alpha=3.08$,
in which the period-2 orbit is $\po=(\pop_1,\pop_2)=(0.5696,0.7551)$.

\begin{table}[htbp]
	\begin{tabular}{|c|c|c|}
		\hline
		$k_{AB}$ & $\delta_a = 0.01$ & $\delta_a = 0.015$ \\
		\hline
		$\delta_b = 0.01$ & 4.6883$\times10^{-9}$ & 4.6883$\times 10^{-9}$ \\
		\hline
		$\delta_b = 0.015$ & 7.6215$\times 10^{-9}$ & 7.6215$\times 10^{-9}$ \\
		\hline
	\end{tabular}
	\caption{Transition rate $k_{AB}$ for different $\delta_a$ and $\delta_b$. Here, $\alpha = 3.08$, $\sigma = 0.04$.  }
	\label{tb::transition_rate}
\end{table}
 The first viewpoint of MPLP is to compare the total current out of $A$,
$r_{AB}^{-}(x)=\int_{\dom} J(x,y)\d y $ for $x\in A$.
The set $A$ of concern is the union $A_1\cup A_2$,
where $A_i= [\pop_i-\delta_a, \pop_i+\delta_a], ~i=1,2$.
The set $B=[0,\delta_b]\cup [1-\delta_b, 1]$.
Table \ref{tb::transition_rate}
shows that
  $\delta_a$, the width of the set $A$, has little influence
on the result of the transition rate $\kappa_{AB}$,
and  $\delta_b$ has a slightly more significant  influence
on $\kappa_{AB}$. This observation is expected since
the set $A$ is a small neighbourhood of the  linearly {stable}
periodic orbit of  the logistic map.
To test the impact on the MPLP point,
we plot  in Figure \ref{figure::p_AB} the
 total current
$r_{AB}^{-}(x)$ for  $x\in A_1$ ({\it left}) and $x\in A_2$ ({\it right})
for the  different  widths  specified in Table \ref{tb::transition_rate}.
As shown in this figure, the window $A_2$ where the periodic point
 $\pop_2$ lies carries $30\%\sim 50\%$ more reaction current
 than the window  $A_1$, for various values of $\delta_a$ and $\delta_b$.
 We also tested this result of the MPLP point
 by varying  $\sigma$ between $0.01$ and $0.04$, and reached
 the same conclusion that the second periodic point
 $\pop_2$ is the MPLP.

So, our technique based on the relative size
of the total current out of the set $A$ robustly
identifies   the point $\pop_2$ from the period-2 orbit $(\pop_1, \pop_2)$
as the MPLP. In the sense of the $A$-$B$  transition events,
we can claim that the point $\pop_2$ is less stable, or more active,
under the random perturbation.
Note that in terms of  the invariant measure,  $\pi(\pop_2)>\pi(\pop_1)$.
It is $\pop_1$ that has a smaller  equilibrium probability density.

 \begin{figure}[h!]
	\centering
 		\includegraphics[width = \textwidth]{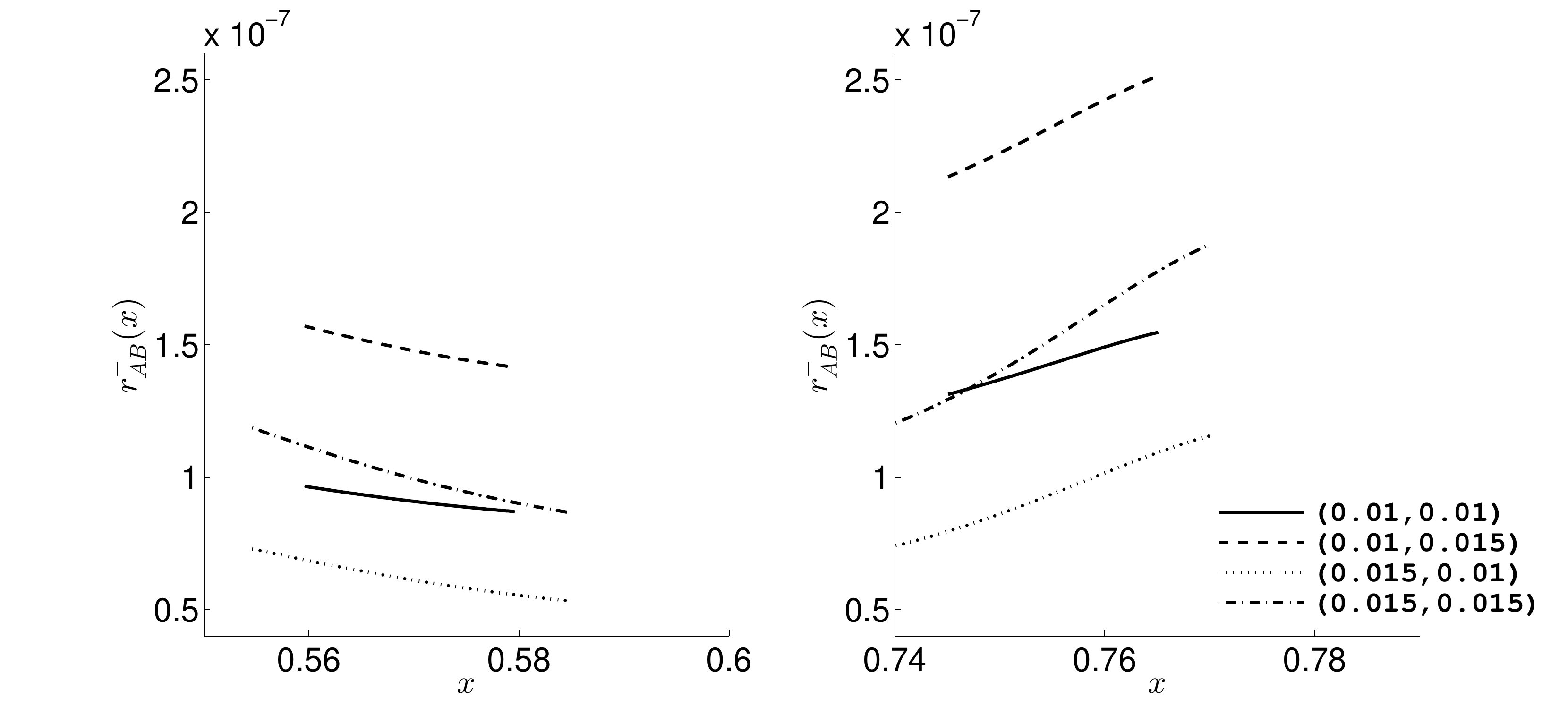}
	 	\caption{ $r_{AB}^{-}(x)$  for $x$ in the union of the sets  $A_1=[\pop_1-\delta_a, \pop_1+\delta_a]$(left) and $A_2= [\pop_2-\delta_a, \pop_2+\delta_a]$(right).
		 $\po=(\pop_1,\pop_2)=(0.5696,0.7551)$ is the period-2 orbit.
		  $\alpha=3.08$. $\sigma=0.04$.
		The solid line corresponds to $\delta_a = 0.01$, $\delta_b = 0.01$; the dashed line corresponds to $\delta_a = 0.01$, $\delta_b = 0.015$; the dotted line corresponds to $\delta_a = 0.015$, $\delta_b = 0.01$; the dash-dot line corresponds to $\delta_a = 0.015$, $\delta_b = 0.015$.
}	\label{figure::p_AB}
\end{figure}

In the following, we analyze the dynamical bottleneck and dominant transition pathways for this period-2 case.
 We will  restrict   to those dominant  transition paths with the minimal path lengths
 $N(z^*)$
to exclude the possible existence  of loops.
For simplicity, we will omit the $N(z^*)$ in the notation $W^{N(z^*),i}_{z^*}$ and write $W^i_{z^*}$.
  We choose the window width  $\delta_a = \delta_b = 0.01$.
 After building the effective reactive current  $J^+(x,y)$  ,
 we found that  the $A$-$B$ competency $z^{*}\approx 1.98\times 10^{-6}$
 by the binary search between $0$ and $M=\max_{\dom \times \dom} J^+(x,y)$.
 $N(z^*)$ is equal to $2$.
 Then, we look for the sequences of the sets $W^i_{z^*}$ for $i=2,1,0$,
 by using a number of pilot points to explore these sets.
 The numerical result, up to the accuracy $10^{-4}$, shows  the following:
 \[
 \begin{split}
  W^{2}_{z^{*}} &= [0.9900, 0.9928]\subset B,
 \\
W^{1}_{z^{*}} &= \{0.5331\}
\subset \dom \setminus (A\cup B),
\\
     W^{0}_{z^{*}} &= \{0.7651\}\subset A.
      \end{split}
     \]
  Then the
 $A$-$B$ dynamical bottleneck $\Btn (A,B)$
 is $(0.7651,0.5331)$.
 Let   $0.5331$ be the new set $A'$ and search for the  $A'$-$B$ dynamical bottleneck. Then we obtain the second dynamical bottleneck $(0.5331, 0.9900)$.
 Finally, we get the dominant transition path $$\pt \approx (\underline{0.7651}, \underline{0.5331}, 0.9900), ~~~\mbox{at}~~~ \sigma=0.04,$$
 where the underlined values  correspond to the location of the dynamical  bottlenecks.
This result of the dominant transition path is unchanged when we changed the grid size
between $1.7\times 10^{-4}$  and  $3.4\times 10^{-4}$ in discretizing the space $\dom=[0,1]$.
We also varied the width $\delta_a$ between $0.01$ and $0.02$
and obtained the same  result for the dominant transition path $\vec{\varphi}$.
The first point of the dominant transition path $\vec{\varphi}$,
i.e., the point in $W^0_{z^*}$, lies in the window $A_2$
for the second periodic point $\xi_2$. Thus the $A$-$B$ competency
is actually realized by the $A_2$-$B$ competency.
  So, we conclude that $\xi_2$ is   also the MCPP.
    The $A$-$B$ dominant transition path
 starts from a boundary point in $A_2$, followed by a jump to some point  on the left but far away from
$\pop_1$  to escape the periodic orbit, and eventually jumps into the set $B$.

\subsection{Bifurcation diagram for the period-2 case}
It is interesting to see how the above transition mechanisms
(MPLP, MCPP, dominant transition paths, etc.)
change when the noise amplitude $\sigma$ or the parameter $\alpha$ changes.
The following numerical results  show bifurcations for varying parameters,
and we will see that the two criteria
do not always give the same   conclusion.

\subsubsection{change $\sigma$}
We still fix $\alpha=3.08$ but now change the value of the noise amplitude $\sigma$
between $0.01$ and $0.04$.
Remind that the period-2 orbit is $\po=(\pop_1,\pop_2)=(0.5696,0.7551)$.

Figure \ref{fig:MPLP} plots the probability density at $\pop_1$ and $\pop_2$
of the last hitting distribution of the transitions from $A$ to $B$.
 It shows that  $\pop_2$ always wins $\pop_1$ as the MPLP
 for $\sigma\in (0.01,0.04)$.
The dependence of the transition rate $\kappa_{AB}$ on the noise amplitude $\sigma$,
 in Figure \ref{fig:Arrh},
shows an Arrhenius-like relation.
\begin{center}
\begin{figure}[htbp]
\begin{subfigure}[b]{0.475\textwidth}
		\includegraphics[width=\textwidth]{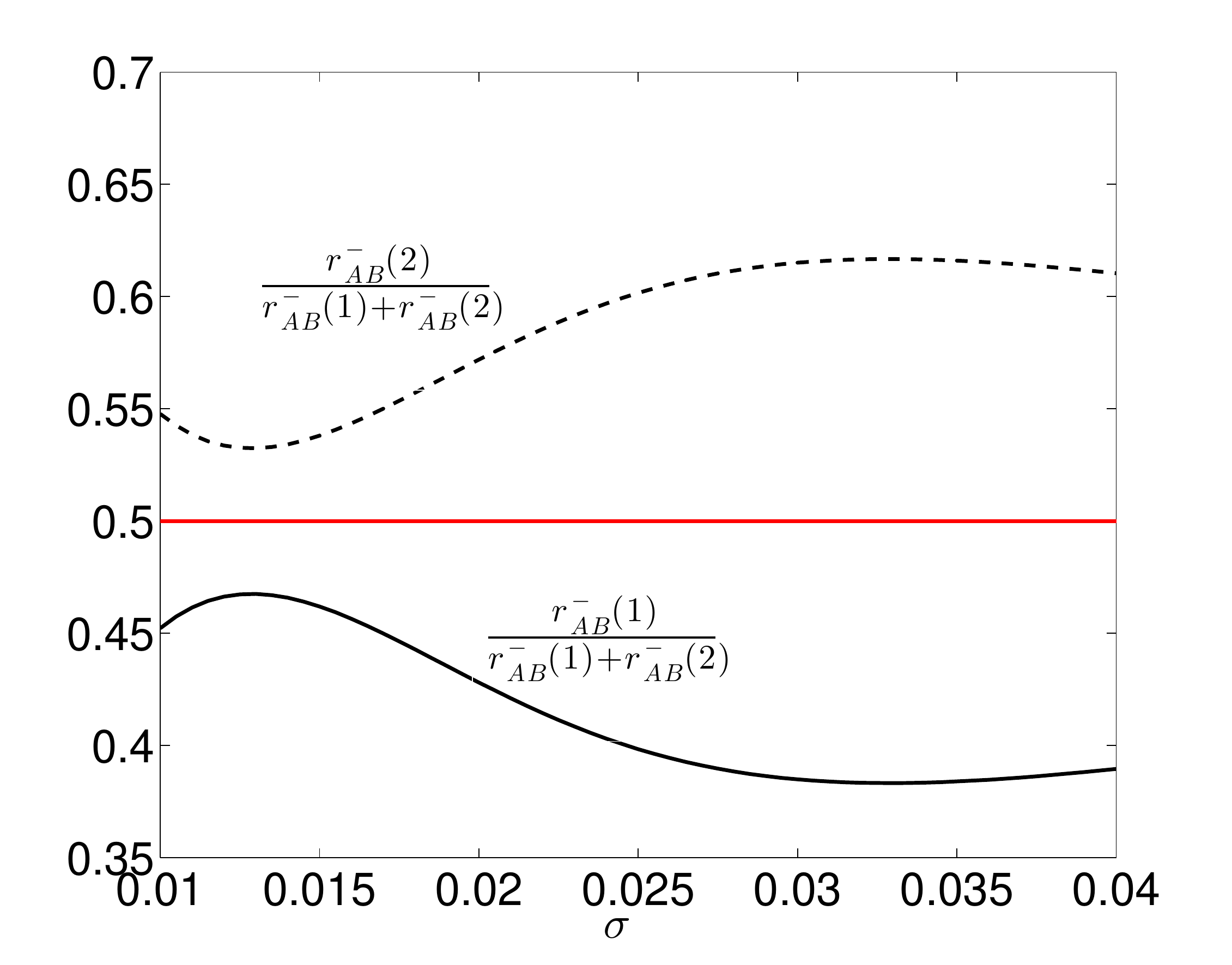}
		\caption{$\frac{r^{-}_{AB}(i)}{r^{-}_{AB}(1)+r^{-}_{AB}(2)}$ versus $\sigma$}
		\label{fig:MPLP}
	\end{subfigure}
	\begin{subfigure}[b]{0.495\textwidth}
		\includegraphics[width= \textwidth]{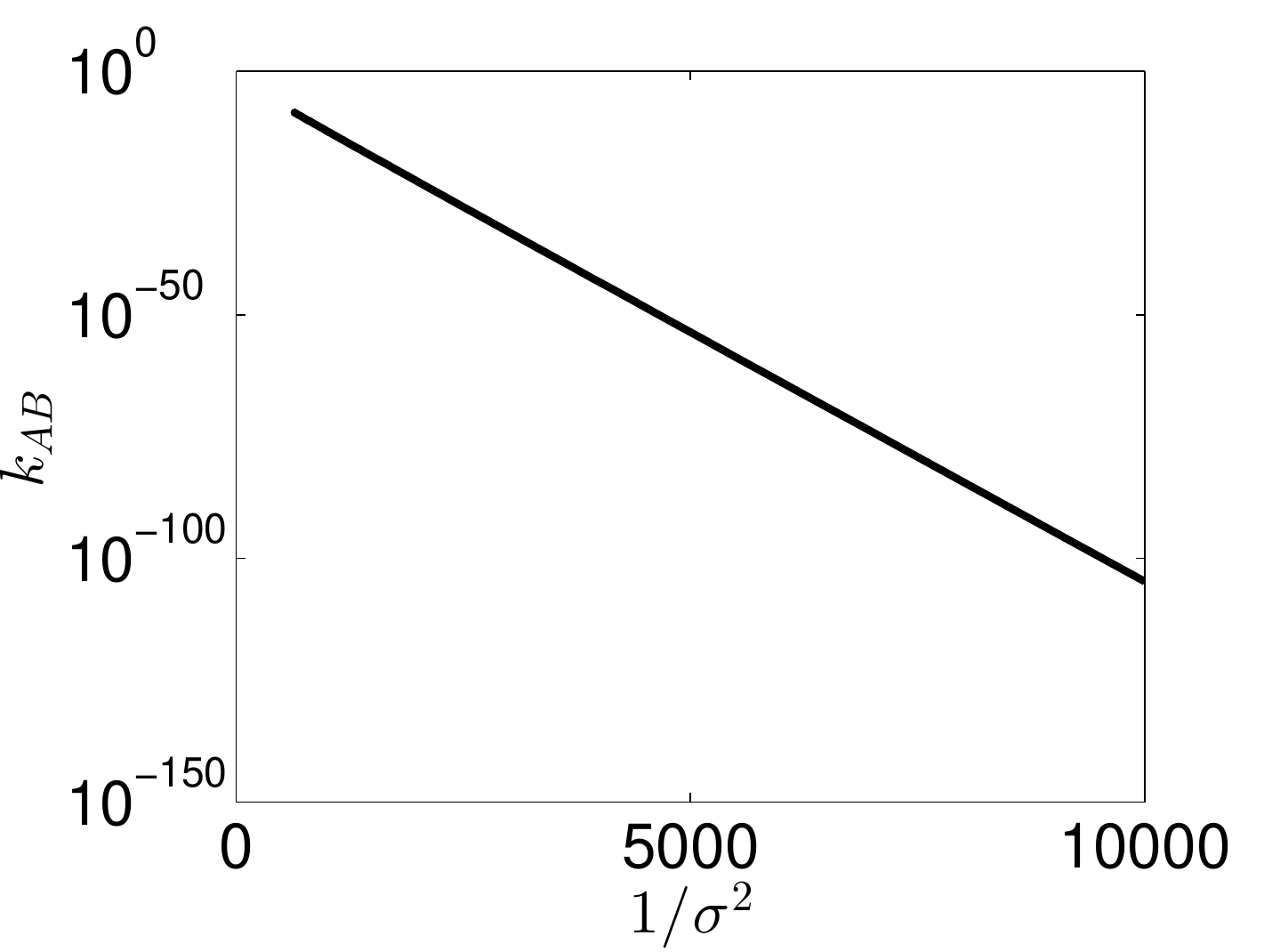}
		\caption{$\kappa_{AB}$ versus $1/\sigma^{2}$. }
		\label{fig:Arrh}
 	\end{subfigure}
\end{figure}
\end{center}

For the results about the dominant  transition paths,
the first observation is that  the minimal length of the dominant transition paths,
$N(z^*)$, grows as $\sigma$ decreases.
 For example,
 at  $\sigma=0.02$, the dominant transition path is
 $\pt= (\underline{0.7651},
\underline{0.5269}, 0.9743, 0.0100).$
At $\sigma = 0.014$,  the dominant transition path has  the minimal length $5$:
$$\pt = (\underline{0.5596}, \underline{0.7761}, 0.5181, 0.9740, 0.0100).$$
 At $\sigma = 0.013$,  the dominant transition path has  the minimal length $6$:
 $$\pt = (0.7643, \underline{0.5508},\underline{0.7818}, 0.5118, 0.9740, 0.0100).$$

\begin{center}
\begin{figure}[htbp]
		\includegraphics[width=.6\textwidth, height=0.3\textwidth]{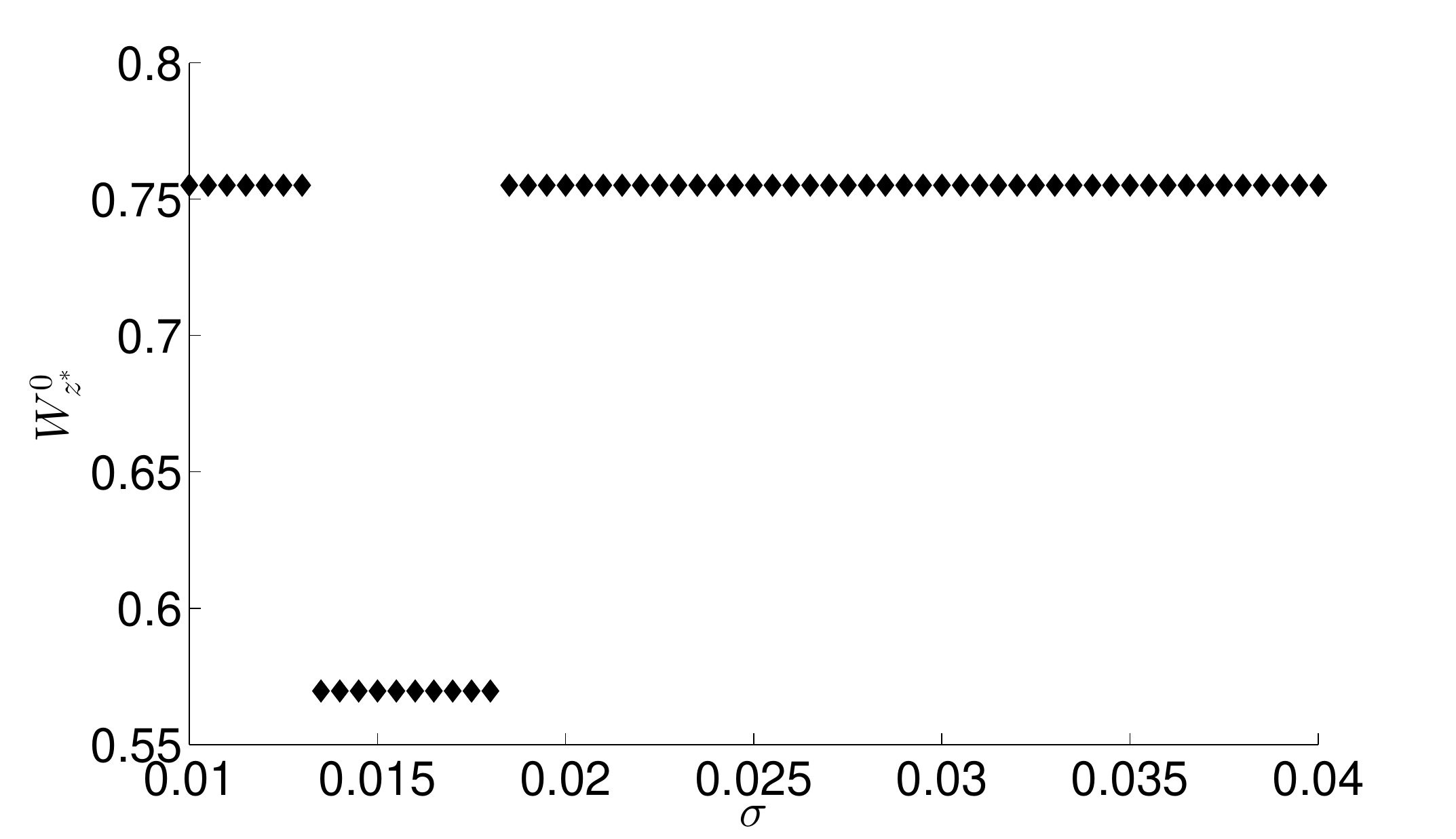}
		\caption{ The maximum competency periodic point. }
		\label{fig:pps}
\end{figure}
\end{center}

When $\sigma$ varies, Figure \ref{fig:pps} plots the MCPP
among the choices of the periodic points $\pop_1$ and $\pop_2$.
This figure shows  two critical values of $\sigma$:
$\sigma_1\approx 0.0134$ and $\sigma_2\approx 0.0185$,
where $\pop_1$ and $\pop_2$ exchange their roles as
MCPP.

\begin{center}
 \begin{footnotesize}

	 \newcommand{\twa}{0.12\textwidth}
 \newcommand{\twb}{0.21\textwidth}

\begin{longtable}{|p{\twa}|p{\twb}|p{\twb}|p{\twb}|}
 	\hline
	$\sigma=$ & $0.0130<\sigma_1$ &  $0.0134 \approx \sigma_1$&  $0.0136>\sigma_1$  \\

	\hline
	$\pt_1=(\pop_1,\cdots)$ & \underline{0.5596}, \underline{0.7751}, 0.5196, 0.9740, 0.0100 & \underline{0.5596}, \underline{0.7758}, 0.5186, 0.9740, 0.0100  & \underline{0.5596}, \underline{0.7758}, 0.5186, 0.9740, 0.0100  \\
	\hline
	$J^+(\pt_1)$ & {\bf 0.0150}, 0.0165, 0.0242, 0.3753 &  {\bf 0.0053}, 0.0063, 0.0093, 0.1255 & {\bf 0.0243}, 0.0289, 0.0424, 0.5557   \\
	
		\hline

	\hline
	$\pt_2=
	(\pop_2,\cdots)$ & 0.7644, \underline{0.5509}, \underline{0.7818}, 0.5119, 0.9740, 0.0100& 0.7641, \underline{0.5513}, \underline{0.7818}, 0.5119, 0.9740 , 0.0100 & 0.7641, \underline{0.5516}, \underline{0.7818}, 0.5119, 0.9740, 0.0100 \\
	\hline
	$J^+(\pt_2)$ & 0.0230, {\bf 0.0152}, 0.0260, 0.0363, 0.3753 & 0.0079, {\bf 0.0053}, 0.0091, 0.0126, 0.1255 & 0.0357, {\bf 0.0240}, 0.0410, 0.0567, 0.5557  \\
		\hline
	\hline
	$z^*=$ & 0.0152  & 0.0053 & 0.0243
	\\
	\hline
	$\pt=$ & $\pt_2$ &$\pt_1$, $\pt_2$   & $\pt_1$  \\
	\hline
	\caption{The $A$-$B$ dominant transition path $\pt$
	and the  $A$-$B$ competency  $z^*$ for three values of $\sigma$:
	$\sigma<\sigma_1$, $\sigma=\sigma_1$ and $\sigma>\sigma_1$.
	  $\pt_1$ and $\pt_2$ are the $A_1$-$B$  and $A_2$-$B$ dominant transition paths
	  respectively.  The various dynamical bottlenecks are   underlined.  Their capacities are marked in bold font  at the $J^+$ row (by multiplying the unit $10^{-58}$,   $10^{-54}$  and $10^{-53}$, respectively for each column).
	 The   $A$-$B$ competency,  $z^*$, is  determined by the maximum of the $A_1$-$B$ and $A_2$-$B$ capacities. Note that the periodic points are located at
	 $\po=(\pop_1,\pop_2)=(0.5696,0.7551)$ and $\delta_a=\delta_b=0.01$.}

	\label{table::bifurcation}
\end{longtable}
 \end{footnotesize}

\end{center}

We demonstrate  a more detailed
analysis at the bifurcation point $\sigma_1$
 in Table \ref{table::bifurcation}
as well as in Figure \ref{figure::path}.
Table \ref{table::bifurcation} compares
the $A_i$-$B$  dominant  transition paths $\pt_i$, $i=1,2$.
That is   $z^*(A_i, B)=\Cp(\pt_i)$ for $i=1,2$.
The $A$-$B$ dominant transition path
$\pt$ is the path among $\pt_1$ and $\pt_2$ with the larger competency.
Remind that the competency of a given path $(\varphi_0,\cdots,\varphi_N)$ is calculated as
the minimum of the effective currents $J^+(\varphi_n,\varphi_{n+1})$
at each edge $(\varphi_n,\varphi_{n+1})$,
which is denoted in bold font in Table~\ref{table::bifurcation}.

\begin{figure}
		\begin{subfigure}[b]{0.8\textwidth}
		\includegraphics[width=\textwidth]{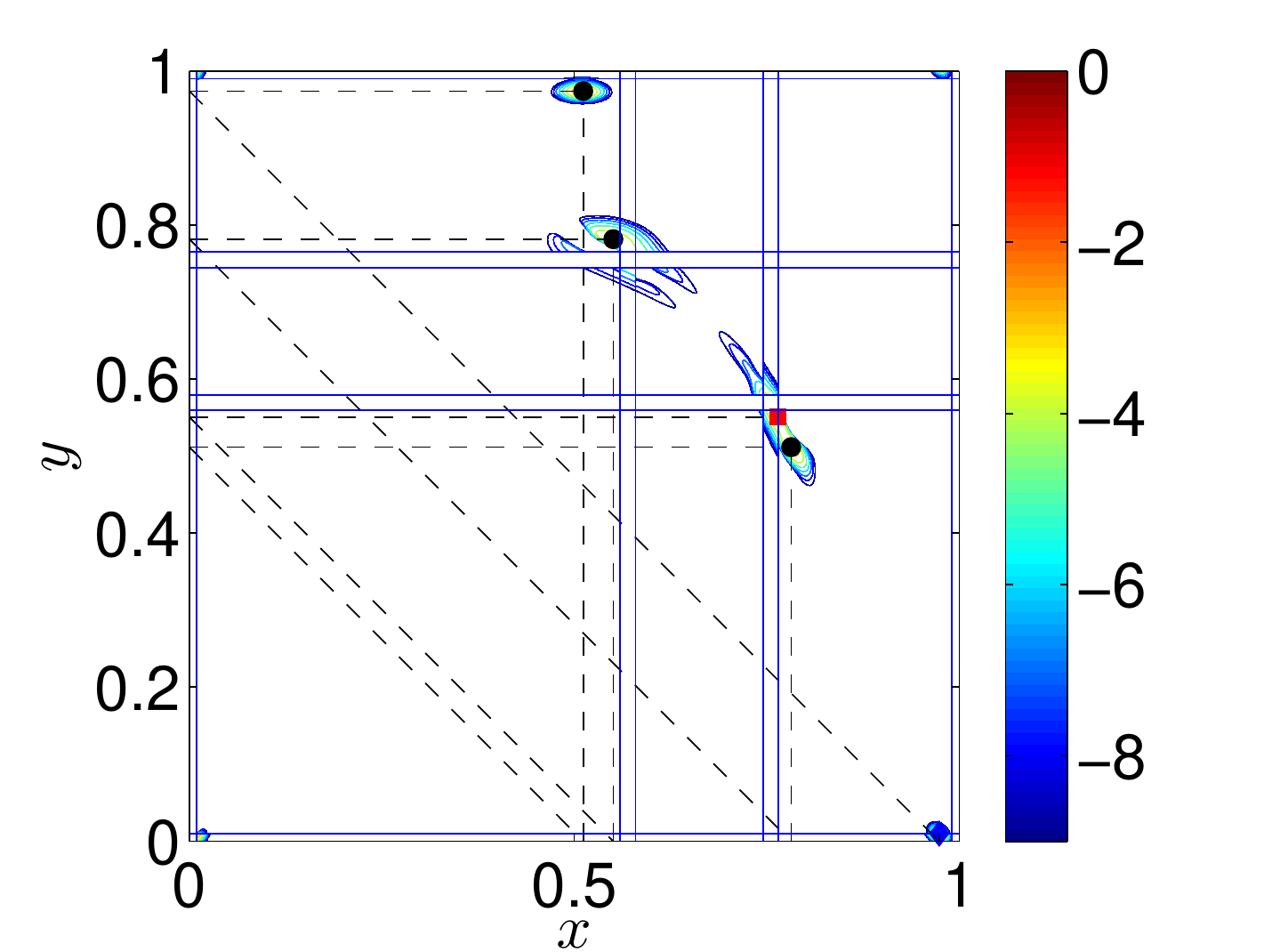}
		\label{figure::path_0_013}
	\end{subfigure}
	\begin{subfigure}[b]{0.8\textwidth}
		\includegraphics[width=\textwidth]{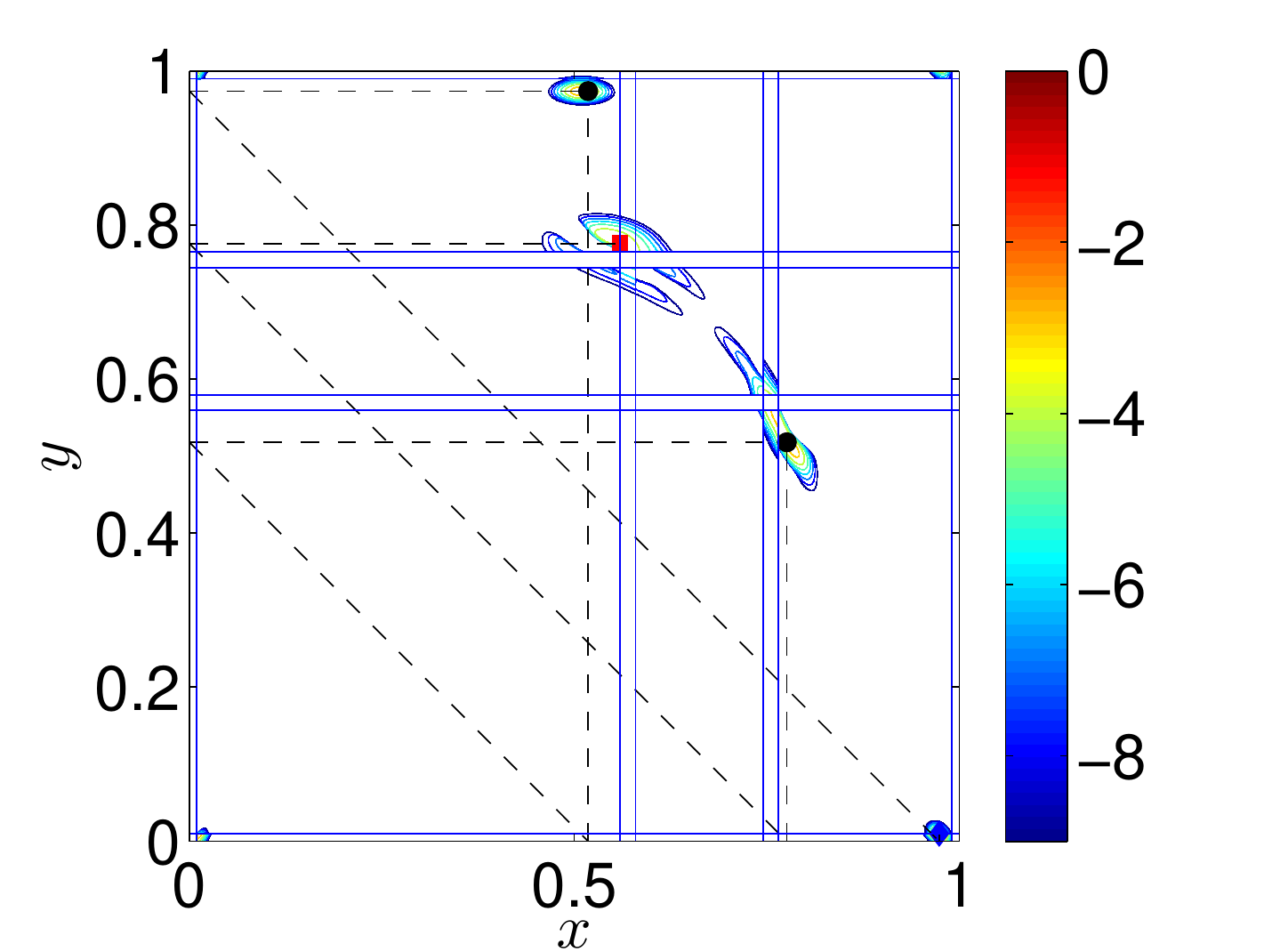}
		\label{figure::path_0_014}	
	\end{subfigure}
	\caption{This figure  visualizes (in form of cobweb plot) the dominant transition path in the contour plot of $\log(J^{+}/M)$ where $M = \max_{x,y\in \dom} J^{+}(x,y)$ ($\alpha = 3.08$, $\delta_a=\delta_b=0.01$).
	Top: $\sigma=0.013$; Bottom: $\sigma=0.014$.
	 The six vertical and six horizontal straight lines (solid, blue) are   the boundaries of   $A$ and $B$.
	 The red and blue dots represent the first  and the last edge of the path.
	 The black dots represent all the other edges.}
	\label{figure::path}
\end{figure}

\begin{figure}
	\includegraphics[width=0.6\textwidth]{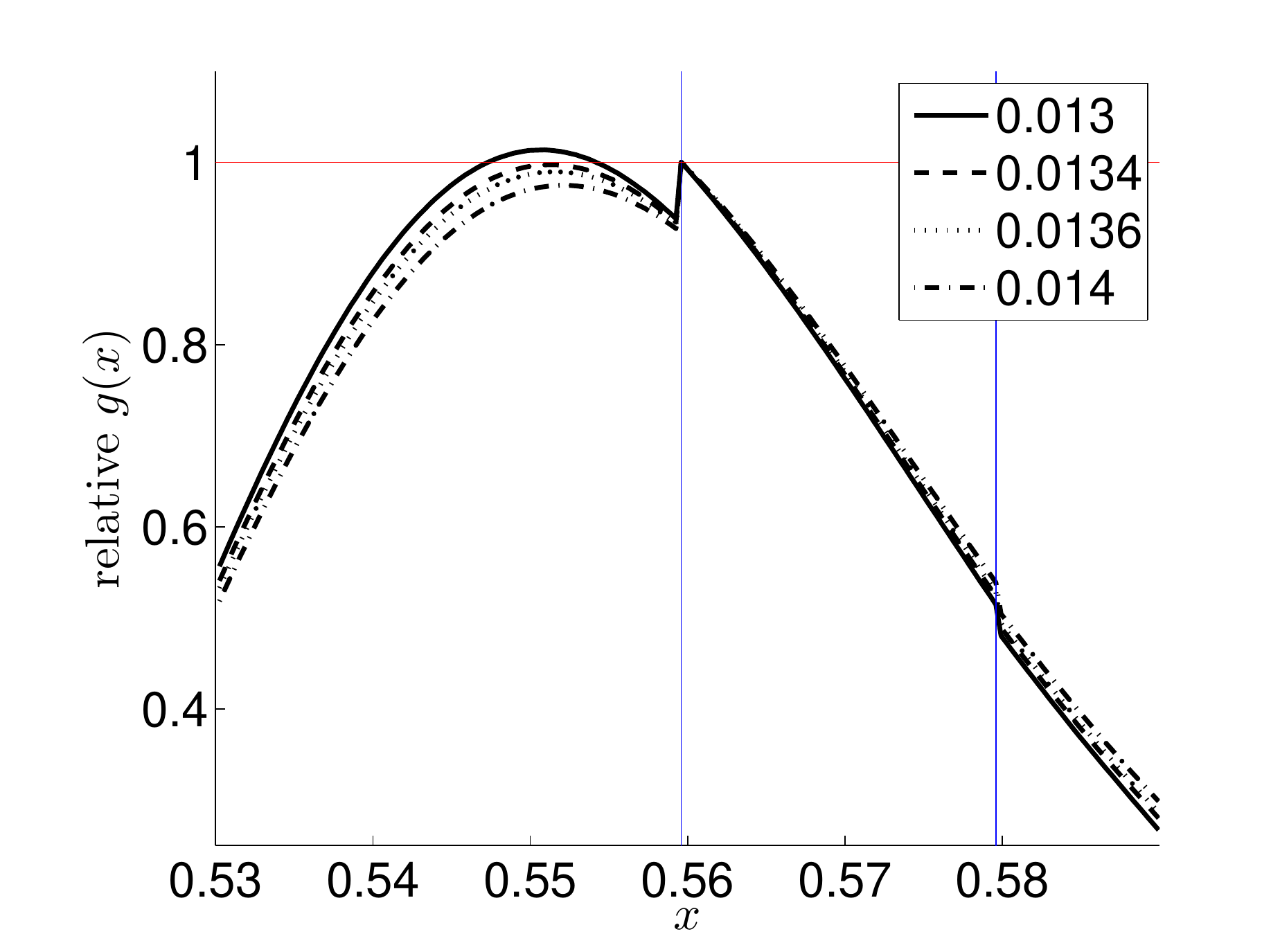}
	\caption{ The plot of      $g(x)/g(\pop_1-\delta_a)$  near $A_1$ for
	the four values of $\sigma$ from $0.013$ and $0.014$. Note that
	$g$ is not continuous at the boundary locations of   $A_1$:
	$\pop_1-\delta_a=0.5596$  and $\pop_1+\delta_a=0.5796$, shown as the two  vertical lines in this figure. }
	\label{fig:gb}
\end{figure}

To understand the bifurcation of the   MCPP,
we need analyze the competition of the
two capacities $z^*(A_1,B)$ and $z^*(A_2,B)$,
which are further determined by the $A_i$-$B$ dynamical
bottlenecks $\Btn(A_i,B)$ on the $A_i$-$B$ dominant transition paths $\pt_i$ for  $i=1,2$.
The $A_1$-$B$ dynamical bottleneck is the first step of jump on $\pt_1$,    from
the left boundary point of $A_1$ to a point
(located at $0.77\sim0.78$) near
the right interval $A_2$.
The $A_2$-$B$ dynamical bottleneck is
the   second step on $\pt_2$, corresponding to
the jump from a point  slightly on the left  side of  the interval $A_1$,
to a point quite close to one point of the  $A_1$-$B$ dynamical bottleneck.
Hence for $\sigma$ around the value $\sigma_1$,
both of the $A_i$-$B$, $i=1,2$ dynamical bottlenecks are
the jumps from a region near the left boundary  of $A_1$
(including $A_1$'s left boundary), denoted as $I_1$
to a region near the right boundary of $A_2$, denoted as $I_2$.
So, by setting $I_2=[0.7, 0.82]$ and $I_1=[0.53,0.59]$,
we investigate  the maximum possible reactive current  for any given $x\in I_1$:
 $g(x):=\max_{y\in I_2} J^{+}(x,y)$ for $x\in I_1$.
 The maximizer of this function, whether it is equal to the left boundary of $A_1$ or not,
 will determine which one of $\pt_1$ and  $\pt_2$ is the $A$-$B$ dominant transition path.
  By plotting the graph of the function $g$
  for several $\sigma$ values around the critical value $\sigma_1$
  in Figure \ref{fig:gb} and rescaling $g$ by its value at the left boundary of $A_1$,
  we indeed find that it is the competition of two local maximizers
  of $g$ that leads to the bifurcation of the dominant transition path
  from $\pt_2$ to $\pt_1$ as $\sigma$  increasingly   passes $\sigma_1$ .

  The bifurcation at the second critical value $\sigma_2$ of the noise amplitude  $\sigma$
is also due to the change of the effective current  $J^+(x,y)$,
which yields the changes of the MPCC and the dominant transition paths,
   via  the competition of the local maximizers in the interiors
and the  values  at the  boundary points of  $A_1$ and $A_2$
for the function $J^+(x,y)$.

\subsubsection{Change $\alpha$}

\begin{figure}[htbp]
\includegraphics[width=0.48\textwidth]{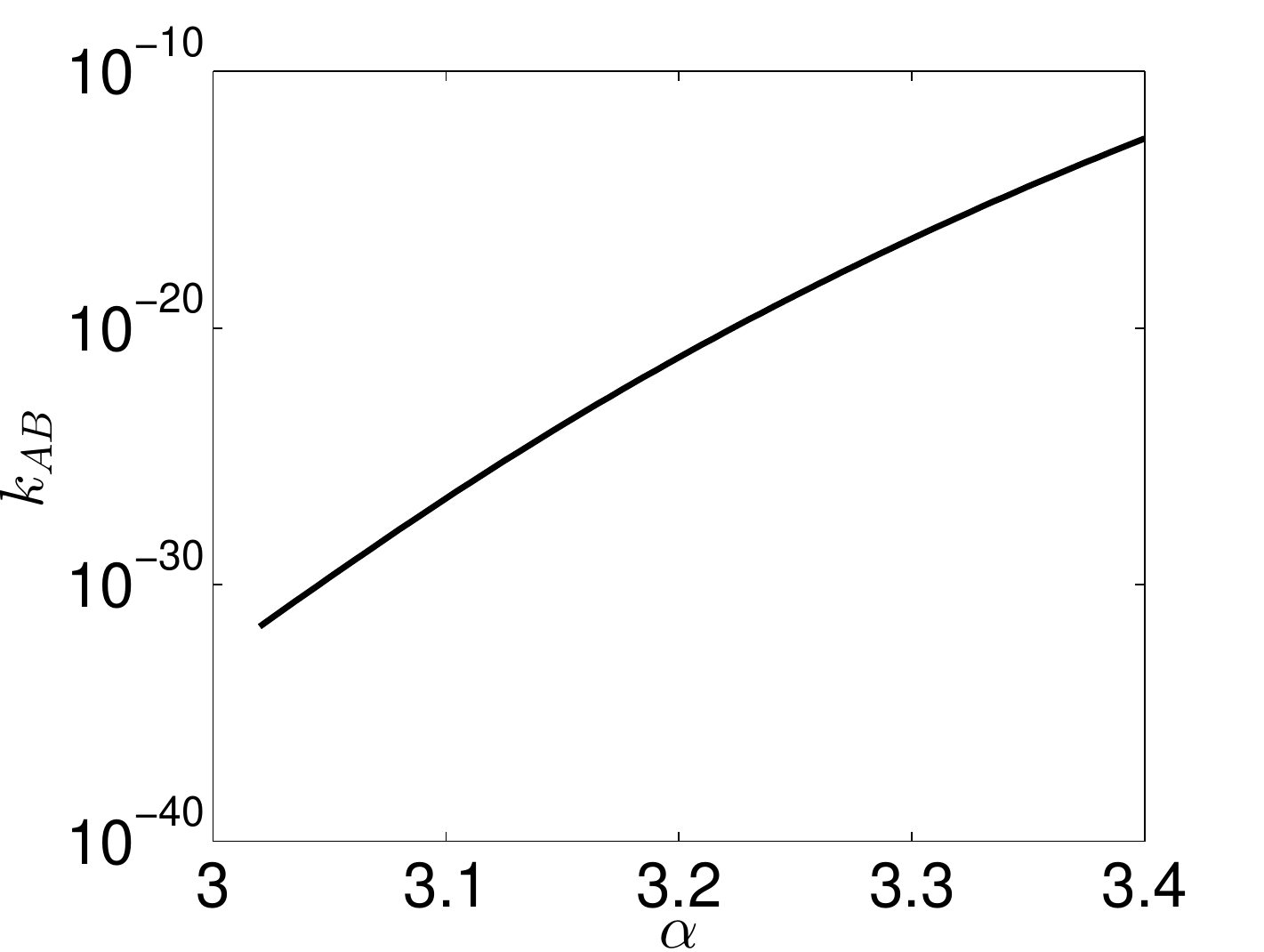}
\caption{$\kappa_{AB}$ versus $\alpha$.}
\label{fig:kalpha}
\end{figure}

\begin{figure}[htbp]
	\centering
	\begin{subfigure}[b]{0.48\textwidth}
		\includegraphics[width=\textwidth]{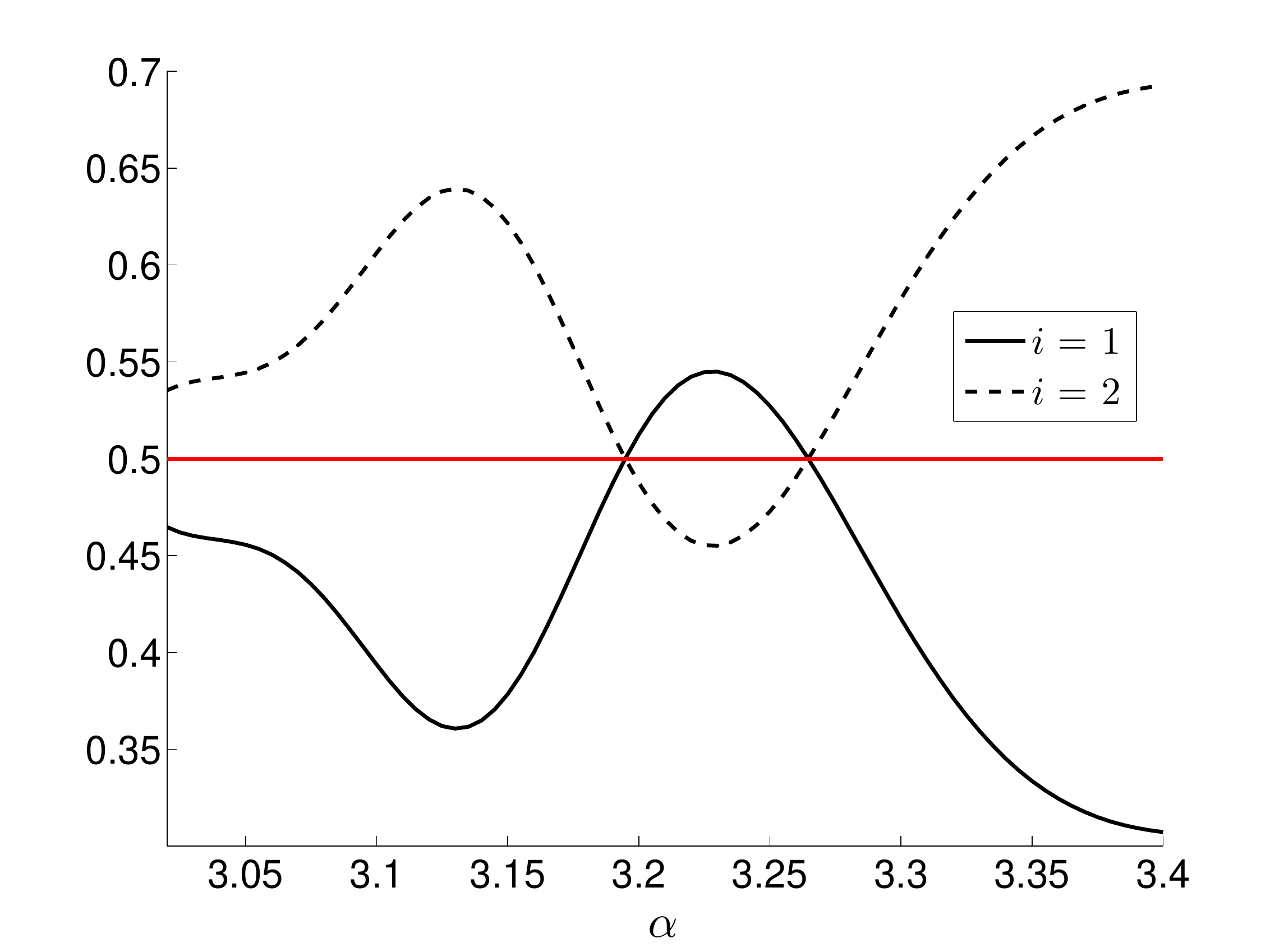}
		\caption{${r_{AB}^-(i)} /{( r_{AB}^-(1)+r_{AB}^-(2))}$ versus $\alpha$.}
		\label{fig:ralpha}
	\end{subfigure}
	~
	\begin{subfigure}[b]{0.48\textwidth}
		\includegraphics[width=\textwidth]{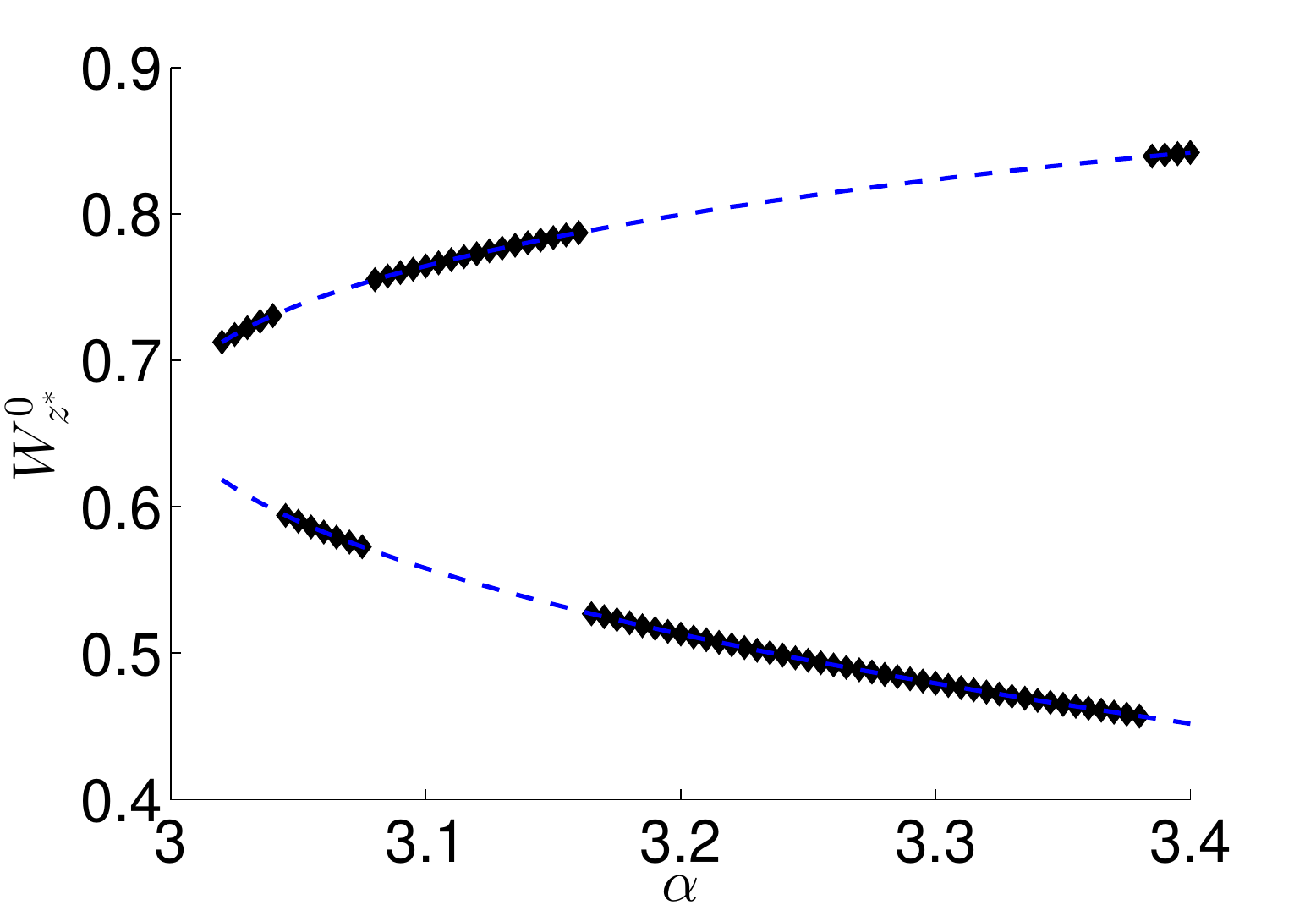}
		\caption{The MCPP.}
		\label{fig:palpha}
	\end{subfigure}
	\caption{ The MPLP and the MCPP
	for $3.02\leq \alpha \leq 3.40$. $\sigma = 0.02$, $\delta_a = \delta_b = 0.01$.
	The horizontal straight line in (A) indicates the threshold $0.5$.
		The dashed curves in  (B) represent  the locations of the two periodic points for each $\alpha$.}
	\label{fig::varying_alpha}
\end{figure}

When $\alpha \in [3.02, 3.40]$, the only stable invariant set of the \Lm\
is the period-2 orbit. We are now interested in how the
value of $\alpha$ influences the
transition rate and the roles of the individual periodic points.
Fix $\sigma=0.02$ and $\delta_a=\delta_b=0.01$.
Figure \ref{fig:kalpha} shows that the transition rate $\kappa_{AB}$
 increases in $\alpha$ and this dependency is
 nearly exponential.
 To identify the MPLP between the two periodic points $\pop_1$
 and $\pop_2$ ($\pop_1$ is defined to be the smaller one),
 the   probability mass $r_{AB}^-(i) / (r_{AB}^-(1)+r_{AB}^-(2))$
 is plotted in Figure \ref{fig:ralpha}.
 As shown in this figure,  $\pop_1$ is the  MPLP
 only when  $\alpha $ is  approximately  between $3.20$ and $3.26$.
  Figure \ref{fig:palpha} shows the
  MCPP in dark diamond-shaped  dots
for each $\alpha$.  For the range of $\alpha$ we investigated here,
there are four critical values of $\alpha$ where the MCPP
switches between the  two periodic points $\pop_1$ and $\pop_2$.
As explained in Section \ref{ssec:two},  the MPLP and MCPP
can be different so the bifurcation points of $\sigma$
in Figure  \ref{fig:ralpha} and  \ref{fig:palpha} are different  .


\section{Discussion}
\label{sec:sum}
In conclusion, we have described the method based on the \tpt ,
illustrated on the example of the randomly perturbed \Lm, to study the
stochastic  instability of the linearly stable periodic orbit
 in the context of noise-induced transitions.
The introduced concepts of most-probable-last-passage point and the
maximum competency point   are the novel descriptions of the
stochastic instability for the  linearly stable periodic orbit.
We demonstrated the capability
of these two proposed perspectives
to quantify the stochastic instabilities of the
individual periodic point in one periodic orbit.
It should be noted that
although   only    the case of period-2  in discrete map
was analysed here,
our method can also be applied to other types of the
set $A$ with more complex structures.
In fact,
our approach  based on the \tpt\  is  generic to any ergodic stochastic dynamical systems,
 such as the multiplicative random perturbations,
 and to the arbitrary nonintersecting closed subsets  $A$ and $B$,
such as  the stable limit cycles in   continuous-time dynamical systems.

%

\bibliographystyle{ieeetr}

\bibliography{reference,gad}

\end{document}